\definecolor{Gray}{gray}{0.9}
\newcommand{\argmin}{\mathrm{argmin}}
\newlength{\dhatheight}
\newtheorem{theorem}{Theorem}[section]
\newtheorem{proposition}[theorem]{Proposition}
\begin{document}

\begin{frontmatter}
\title{A Bayesian `sandwich' for variance estimation}

\begin{aug}
\author[A]{\fnms{Kendrick}~\snm{Li}\ead[label=e1]{kendrick.li@stjude.org}}\and
\author[B]{\fnms{Kenneth}~\snm{Rice}\ead[label=e2]{kenrice@uw.edu}}

\address[A]{Department of Biostatistics, St. Jude Children's Research Hospital, Memphis, Tennessee, U.S.A.\printead[presep={\ }]{e1}.}
\address[B]{Department of Biostatistics, University of Washington, 
Seattle, Washington, U.S.A.\printead[presep={\ }]{e2}.}
\end{aug}

\begin{abstract}
Large-sample Bayesian analogs exist for many frequentist methods, but are less well-known for the widely-used `sandwich' or `robust' variance estimates. We review existing approaches to Bayesian analogs of sandwich variance estimates and propose a new  analog, as the Bayes rule under a form of balanced loss function, that combines elements of standard parametric inference with fidelity of the data to the model. Our development is general, for essentially any regression setting with independent outcomes. Being the large-sample equivalent of its frequentist counterpart, we show by simulation that Bayesian robust standard error estimates can faithfully quantify the variability of parameter estimates even under model misspecification -- thus retaining the major attraction of the original frequentist version. We demonstrate our Bayesian analog of standard error estimates when studying the association between age and systolic blood pressure in NHANES.
\end{abstract}

\begin{keyword}
\kwd{Bayesian statistics}
\kwd{Robust standard errors}
\kwd{Parametric inference}
\end{keyword}

\end{frontmatter}
\section{Introduction}

Probabilistic models, with a finite number of parameters that characterizes the data generating distribution, are a standard statistical tool. Frequentist methods view the parameters as fixed quantities and use statements about replicate datasets to make inference, while the Bayesian paradigm uses the `language' of probability to directly describe uncertainty about parameters. This philosophical distinction can lead to differences in interpretation of analyses, depending on which framework is being used.

Despite this, in analysis of large samples standard methods both approaches tend to have close analogs under the other. For example, frequentist maximum likelihood estimates (MLEs) and Bayesian posterior means behave similarly in large samples, as do typical standard error estimates and posterior standard deviations --- fundamentally due to Bernstein-von Mises' theorem~\cite[\S 10.2]{van2000asymptotic}. 

Complications arise, however, when we consider model violations.

\subsection{Literature review: Frequentist and Bayesian statistics under model violations}

Frequentist inference in large samples remains operationally straightforward; under only mild regularity conditions straightforward `robust' variance estimates~\citep{royall1986model} provide inference for the parameters consistently estimated by the MLE or other $M$-estimators, even under model mis-specification. These robust methods (also known as `sandwich' estimates of variance) are widely-used, extremely simple to calculate, and are available in standard software. Interpreting the parameter that is being estimated can be more of a challenge. For example, without assuming that a regression model underpins the data-generating process, we may nevertheless usefully view some regression coefficients as summaries of trends over the whole population~\citep{buja2019modelsi}. More generally -- and more abstractly -- the parameter consistently estimated by the MLE of a mis-specified model is the form of that model which is closest in Kullback-Leibler divergence to the true data-generating distribution~\citep{huber1967behavior}. We shall refer to this parameter value as the `minimal Kullback-Leibler point', as known as the `pseudotrue parameter'~\citep{bunke1998asymptotic}.

For Bayesian approaches, the interpretation issues are similar; large sample results prove that under model violations the posterior mean approaches the minimal Kullback-Leibler point, subject to  mild regularity conditions, similarly to the MLE. Formally, \citet{berk1966limiting} and \citet{berk1970consistency} established the weak convergence of posterior distribution to a point mass at the pseudotrue parameter value. \citet{bunke1998asymptotic} and \citet{kleijn2012bernstein} investigated the usual Bayesian estimate (posterior mean) and formally established its consistency to the minimal Kullback-Leibler point and asymptotic Normality. Interpreting Bayesian point estimates seems no more challenging than those of MLEs or similar non-Bayesian estimates. Posterior variances are more complex: rather than just quantify the variability of estimates under repeated sampling (regardless of whether this sampling happens under the specified model) they quantify uncertainty of beliefs about the pseudotrue parameter where we have assumed that the potentially mis-specified model is actually true. As discussed carefully by \citet{walker2013bayesian}, reporting the standard model-based posterior variance is therefore hard to justify when model mis-specification is a concern.

To provide a Bayesian resolution of this concern, one approach relies on modifying the likelihood used, so that the corresponding posterior distribution's variance behaves similarly to the sandwich variance estimate. For linear regression with independent outcome with potentially heteroscedastic residuals, \citet{startz2012bayesian} proposed a Bayesian analog of robust variance estimation by flexibly modeling the covariance of the residual error and using the likelihood of just some moments of the dependent variables, rather than the whole vector of outcomes. Also focusing on linear regression but allowing the true data generating process to be nonlinear, \citet{szpiro2010model} gave a fully Bayesian analog of the sandwich covariance estimate, as the posterior variance of a specific linear summary of the parameter space, in the limiting situation of an extremely flexible model and prior -- essentially a form of `non-parametric Bayes'~\citep{dunson2010nonparametric}. These results, however, are to date not available for other regression estimates or more general models. Relaxing the requirement that a fully-specified model be used, \citet{kim2002limited} uses a limited information likelihood that pivots around the identifying estimating equations (rather than the data generating process) as an instrument such that the posterior distribution weakly converges to the asymptotic distribution of the generalized method of moments estimators, of which the asymptotic variance is the sandwich variance estimate. Similarly, \citet{muller2013risk} proposes pivoting inference on an artificial normal likelihood function centered at the MLE with the sandwich variance estimate as the variance-covariance matrix, such that the variance of the `sandwich posterior' distribution converges the sandwich variance; \citet{hoff2013bayesian} similarly uses the joint likelihood of the parameter of interest and the variance of the score function  to derive another `Bayesian sandwich posterior' with similar large sample properties.

Other methods for Bayesian robust inference exist which do not rely on the sandwich variance. Drawing inspiration from physical models, \citet{o2013bayesian} defends model-based inference but suggests including a nonparametric discrepancy term to account for deviation from that model. \citet{miller2018robust} constructed a likelihood that conditions only on the distribution of the data being close to the distribution expected under a model, with the definition of “close” determining the type of robustness obtained.  \citet{nott2023bayesian} reviewed several other likelihood-based approaches for Bayesian robust inference. A further alternative Bayesian approach to robustness is introduced by \citet{lancaster2003note}, who follows up the well-studied connection between the robustness of sandwich estimates and bootstrapped covariance estimates (see e.g. \citet{buja2019modelsi,buja2019modelsii}) by instead considering the Bayesian bootstrap~\citep{rubin1981bayesian}. This replaces the the regular boostrap’s discrete re-sampling with smoother (re)weighting, based on a Dirichlet prior, but retains the classic bootstrap's robustness properties. However, as \citet{rubin1981bayesian} notes when introducing the Bayesian bootstrap, it involves `somewhat peculiar” model assumptions, distinct from those typically made under any form of regression modeling.

\subsection{Manuscript outline}

Despite the expansive literature on this topic, we see that a general Bayesian analog of sandwich variance estimates derived as an extension of model-based arguments -- much as \citet{royall1986model} views them as an extension of classical modeling methods --  is yet to be developed. In this article, we propose such a Bayesian analog, using parametric models and priors for which we calculate the Bayes rule with respect to a form of \textit{balanced} loss function~\citep{zellner1994bayesian}. We introduce the balanced loss function and its theoretic properties in Section~\ref{sec:bil}. The balanced loss function comprises two terms, penalizing estimation error and lack of model fit respectively. In Section~\ref{sec:sim}, we demonstrate the proposed Bayesian robust variance estimate with several examples and use simulation studies to show its behavior in various settings. In Section~\ref{sec:app}, we use Bayesian robust variance estimates to quantify the uncertainty in an analysis to study the association between systolic blood pressure and age using a dataset from the 2017-2018 National Health and Nutrition Examination Survey (NHANES). We conclude with a discussion in Section~\ref{sec:discussion}. Proofs and additional discussion are deferred to the Supplementary Material.

\section{Inference Loss functions}

For general observations we denote  $\bm{\mathsf{{Z_{n}}}}=(\bm Z_1, \dots, \bm Z_n)$ as an independent and identically distributed random sample from a distribution $P_0$ with density $p_0$. Suppose an analyst models the data generating process by a parametric model $\mathcal P = \{P_\theta:\theta \in\Theta \subset \mathbb R^p\}$, the distributions in which are indexed by a real-valued parameter $\theta$. In our major focus on regression models, the $\bm Z_i$'s are the combined outcome and explanatory variables, i.e. $\bm Z_i = (Y_i, \bm X_i)$ where $Y_i\in\mathbb R$ is the outcome variable and $\bm X_i$ is a real vector of explanatory variables. To help orient readers to which statistical paradigm is being employed, for frequentist analyses we use $\theta$ to denote the unknown parameter, while for Bayesian analyses, we use $\vartheta$.We note that  parameters $\vartheta$ act as an instrument for the analysis that is associated with a prior distribution and a posterior distribution derived from a (possibly wrong) likelihood function.  We write $\pi(\cdot)$ as the density function of the corresponding prior distribution, supported on $\Theta$, and $\pi_n(\cdot\mid\bm{\mathsf{z_n}})$ and $\Pi_n(\cdot\mid \bm{\mathsf{z_n}})$ as the posterior density function and {distribution function} of $\vartheta$ conditional on $\bm{\mathsf{{Z_{n}}}}=\bm{\mathsf{z_n}}$, respectively. Here the subscript emphasizes $n$ that the posterior density and distribution function depends on the data. We denote $E_{P}$ and $Var_{P}$ as the expectation and variance (or variance-covariance matrix), respectively, with respect to a posterior measure $P$, such that $E_{P_0}(\cdot)$ and $Var_{P_0}(\cdot)$ respectively denote the expectation and variance (or variance-covariance matrix) with respect to the data generating process, and $E_{\Pi_n}(\cdot\mid \bm{\mathsf{z_n}})$ and $Var_{\Pi_n}(\cdot\mid \bm{\mathsf{z_n}})$ respectively denote the posterior expectation and posterior variance (or variance-covariance matrix). For brevity, we will use $E_0$ and $Var_0$, respectively, in place of $E_{P_0}$ and $Var_{P_0}$. Throughout the article, we will often write $E_{\Pi_n}(\cdot\mid \bm{\mathsf{Z_n}})$ and $Var_{\Pi_n}(\cdot\mid \bm{\mathsf{Z_n}})$ to emphasize the that the Bayesian posterior expectation and variance are random variables themselves, i.e. functions of the random data $\bm{\mathsf{Z_n}}$.

Suppose that $\theta\in\mathbb R^p$ is the parameter of interest and $d$ a corresponding estimate. Using decision theory, optimal Bayesian estimates, known as \textit{Bayes rules}, are obtained by minimizing the posterior risk $E_{\Pi_n}\{L(\vartheta, d)\mid\bm{\mathsf{Z_n}}\}$, where $L(\cdot, d)$ is a loss function describing the discrepancy between the decision  and the parameter. Common loss functions include the $L_1$-loss $\lVert \theta - d\rVert_1$ and $L_2$-loss $\lVert \theta - d\rVert_2$, for which the Bayes rules are the posterior median and posterior mean, respectively. For a more comprehensive review of decision theory and optimality of  Bayes rules, see~\citet{parmigiani2009decision}. 

To give rules that estimate $\theta$ but also indicate the precision of that estimate, we need more general loss functions. To achieve both those goals we consider what we shall call the \textit{inference loss function}
\begin{equation}\label{eq:inference-loss}
    L_I(\theta, d, \Sigma) = \log\lvert\Sigma\rvert+ (\theta - d)^T\Sigma^{-1}(\theta - d),
\end{equation}
where $\Sigma$ is a $p\times p$ positive definite matrix. The right-hand term is a sum of adaptively-weighted losses, each of the form $(\Sigma^{-1})_{ij}(\theta_i - d_i)(\theta_j - d_j)$, where the weights are determined by the elements of the matrix $\Sigma^{-1}$. To prevent the weight for any combination of the $(\theta_i - d_i)(\theta_j - d_j)$ reaching zero, we penalize by the left term, the log of the determinant of $\Sigma$. 
Proposition~\ref{thm:inf-loss} gives characteristics of the inference loss function.

\begin{proposition}\label{thm:inf-loss}
The Bayes rule with respect to the inference loss in (\ref{eq:inference-loss}) is $\hat d = E_{\Pi_n}(\vartheta\mid\bm{\mathsf{Z_n}})$ and $\hat\Sigma = Var_{\Pi_n}(\vartheta\mid\bm{\mathsf{Z_n}})$, which gives the minimized posterior risk $$E_{\Pi_n}\left\{L(\vartheta, \hat d, \hat\Sigma)\mid\bm{\mathsf{Z_n}}\right\}=\log \lvert Var_{\Pi_n}(\vartheta\mid\bm{\mathsf{Z_n}}) \rvert+p$$.
\end{proposition}

By Proposition~\ref{thm:inf-loss}, the Bayes rules for $d$ and $\Sigma$ with respect to $L_I$ are 
the posterior mean and variance. Bernstein-von Mises' Theorem implies that, under correct model specification, the posterior distribution is asymptotically equivalent to a Normal distribution, centered at the MLE and with variance equal to the (frequentist) sampling variance of the posterior mean, which in turn is the same sampling variance as the MLE in large samples~\cite[\S 10.2]{van2000asymptotic}, i.e. $\hat\Sigma_n = Var_{\Pi_n}(\vartheta\mid\bm{\mathsf{Z_n}})\approx Var_{P_{\theta_0}}(\hat d)$.  As a result, using the inference loss function produces both a Bayesian estimate ($\hat d$) of the parameter and  a measure $(\hat\Sigma)$ of variability in that estimate over repeated experiments -- i.e. the standard elements of a frequentist analysis. 

We note that the inference loss function is not new: it was previously discussed by \citet{dawid1999coherent}, who noted that it corresponds to using the log-determinant of the covariance matrix as a criterion to compare study designs. \citet{holland2015minimum} also considered the inference loss as an example of a `proper' loss function. However, to our knowledge, the simple and general form of its Bayes rule is not known, nor are the extensions we provide below.

\section{Balanced Inference Loss functions for model-robust variance estimation}
\label{sec:bil}
We now extend the inference loss function and present a Bayesian analog of robust variance estimators, as the Bayes rules for a loss which penalizes both the lack of model fit \textit{and} estimation error. Our loss function is motivated by the `balanced loss functions' originally proposed by \citet{zellner1994bayesian}. Balanced loss functions have since been developed considerably (see for example \citet{dey1999estimation}, \citet{jafari2012bayesian} and \citet{chaturvedi2014bayesian}), but all share a common form in which the loss is a weighted average of a term indicating lack-of-fit (in some way) and one indicating estimation error. The general form of the balanced loss function we will consider is
\begin{align}
&L_{BI}(\theta, d, \Sigma, \Omega) = \log\lvert\Sigma\rvert +\label{eq:balanced-inference}\\&\qquad \underbrace{(\theta - d)^T\Omega\Sigma^{-1}(\theta - d)}_{\text{Estimation error}} +\nonumber\\&\qquad \underbrace{\frac{1}{n}\sum_{i=1}^n \dot l_i(\theta)^T\left\{\Omega I_n(\theta)\right\}^{-1}\dot l_i(\theta)}_{\text{Lack of fit}},\nonumber
\end{align}
where $\dot l_i(\theta)=\frac{\partial}{\partial\theta}\log\left\{p_{\theta}(\bm Z_i)\right\}$ is the score function based on a single observation,   $I_n(\theta)=-\frac{1}{n}\sum_{i=1}^n\frac{\partial^2}{\partial\theta^2}\log\left\{p_\theta(\bm Z_i)\right\}$ is the (scaled) empirical Fisher information for $\theta$, and we make decisions  $\Sigma$ and $\Omega$ which are both $p\times p$ positive-definite weighting matrices, as well as decision $d\in\mathbb{R}^p$, an estimate of $\theta$. We call $L_{BI}$ in Equation~(\ref{eq:balanced-inference}) the \textit{balanced inference loss} function.

The first two terms in $L_{BI}$ are essentially the inference loss from Equation~(\ref{eq:inference-loss}), described earlier. However the balanced inference loss describes an additional decision, matrix $\Omega$, that sets rates of trade-off between elements of the estimation component of the inference loss versus the data-dependent penalty $$\dfrac{1}{n}\sum_{i=1}^n \dot l_i(\theta)^T I_n(\theta)^{-1}\dot l_i(\theta),$$ a form of `signal to noise' ratio for deviations from the model, indicating lack of fit. Specifically, the score $\dot l_i(\theta)$, the gradient of the log-likelihood with respect to the parameters, will have large entries when the data and corresponding model disagree, i.e. when $\theta$ is far from the peak of the likelihood. The (inverse) Fisher information matrix describes the uncertainty -- `noise', informally -- in the deviations given by the score. Adding these signal-to-noise terms across all observations ensures, straightforwardly, that model fit at all observations is reflected, and by taking their average (i.e. multiplying the total by $1/n$) the lack of fit term will neither vanish nor dominate in the large sample limit. Furthermore, the lack of fit term is invariant to one-to-one transformations of $\theta$. 

We now consider the Bayes rules for the balanced inference function.

\begin{proposition}
\label{thm:bbil}
The Bayes rule with respect to the inference loss in (\ref{eq:balanced-inference}) is  
\begin{align*}&\hat d = E_{\Pi_n}(\vartheta\mid\bm{\mathsf{Z_n}}), \\
&\hat\Omega=E_{\Pi_n}\left\{\,\frac{1}{n}\sum_{i=1}^n \dot l_i(\vartheta)\dot l_i(\vartheta)^TI_n(\vartheta)^{-1}\mid\bm{\mathsf{Z_n}}\,\right\}, \\
&\hat\Sigma = Var_{\Pi_n}(\vartheta\mid\bm{\mathsf{Z_n}})\hat\Omega.\end{align*}
\end{proposition}

From Proposition~\ref{thm:bbil} we see that the posterior mean remains the Bayes rule for estimate $d$, but the decision for $\Sigma$ is now a scaled version of the posterior variance, not the simple posterior variance from Proposition~\ref{thm:inf-loss}, where the rescaling term $\Omega$ is -- essentially -- the posterior expectation of the `lack of fit' terms in Equation~(\ref{eq:balanced-inference}), omitting $\Omega$.

In Theorem~\ref{thm:brse} we show that $\hat\Sigma$ 
estimates the large-sample variance of $\hat d$ in a model-robust manner. 

\begin{theorem}\label{thm:brse}

 Let $\theta^* = \argmin_\theta E_0\left\{p_0(\bm Z_i)/p_\theta(\bm Z_i)\right\}$ be the minimal Kullback-Leibler point. Under the regularity conditions in Section B of the Supplementary Material, we have
 $$n\hat\Sigma\overset{P_0}{\rightarrow}I(\theta^*)^{-1}E_0\{\dot l_i(\theta^*)\dot l_i(\theta^*)^T\}I(\theta^*)^{-1},$$
 where the right-hand side is also the asymptotic variance of $\hat d$ derived in~\cite{kleijn2012bernstein}:
 $$\sqrt n(\hat d - \theta^*)\overset{d}{\rightarrow}N(0, I(\theta^*)^{-1}E_0\{\dot l_i(\theta^*)\dot l_i(\theta^*)^T\}I(\theta^*)^{-1}).$$

\end{theorem}

Theorem~\ref{thm:brse} indicates that $\hat\Sigma$ is asymptotically equivalent to the asymptotic frequentist variance of $\hat d$ (i.e. over repeated experiments), under only mild regularity conditions that do not require fully-correct model specification, and is thus a Bayesian analog of the frequentist `robust' variance estimator. 

As well as providing a Bayesian approach to these empirically-useful methods, our work also allows us to construct Bayesian analogs of Wald-type confidence intervals. We define the two-sided Bayesian robust confidence interval at level $(1-\alpha)$ for the $j$th entry of $\hat d$ as $$\hat d_{j}\pm z_{1-\alpha/2}\hat\Sigma_{jj}^{1/2},$$
where $z_{1-\alpha/2}$ is the $(1-\alpha/2)$-quantile of the standard normal distribution and $\hat\Sigma_{jj}$ is the $j$th diagonal element of $\hat\Sigma$. We stress this is not in general a credible interval; its coverage properties are investigated in Section 4. In practice, it is valuable to report the Bayesian robust standard error ($\hat\Sigma_{jj}^{1/2}$) and Bayesian robust confidence intervals, which inform the frequentist uncertainty of estimates over repeated experiments. Comparing them with the common Bayesian summaries such as posterior standard deviation and credible intervals enables evaluation of the impact of model violations on statistical inference of the parameters.

Besides providing model-robust inference, we note that a slightly simpler version of the balanced inference loss function -- with a univariate correction term $\omega$ instead of a matrix-valued $\Omega$ -- provides a Bayesian analog of quasi-likelihood models~\citep{mccullagh1983quasi}. Details are provided in Section C of the Supplementary Material.

\section{Examples and simulation studies}
\label{sec:sim}
In this section we show, through analytic examples and  simulation studies, that in large samples the Bayesian robust variance estimate quantifies the variability, over repeated experiments, of the realized Bayes point estimates, and that it does so even in the presence of model mis-specification. Derivation of the results are deferred to Section D of the Supplementary Material.

\subsection{Estimating a Normal mean, but mis-specifying the variance}

As a deliberately straightforward first example, suppose a random sample $Y_1, \dots, Y_n$  is drawn from $N(\theta, \sigma_0^2)$ where mean $\theta$ is unknown mean and  variance $\sigma_0^2$ is assumed known. We are interested in estimating $\theta$ but wrongly assume the observations have variance $1$. In other words, we use the model $Y_i\mid\vartheta=\theta_0\sim N(\theta_0, 1)$, in which the variance is mis-specified. The analysis also uses prior  $\vartheta\sim N(\mu, \eta^2)$, for known $\mu, \eta$. Since $\theta$ is one-dimensional, we write $\Sigma=\sigma^2$ and $\Omega = \omega$.

The balanced loss function $L_{BI}$ in this setting is
\begin{align*}
&L_{BI}(\theta, d, \sigma^2, \omega) =\\ &\qquad \log\lvert\sigma^2\rvert + \omega(\theta - d)^2/\sigma^2 + \frac{1}{n\omega}\sum_{i=1}^n (Y_i - \theta)^2.
\end{align*}
Let $\bar Y = \sum_{i=1}^n 
  Y_i/n$ be the sample average. The Bayes rule sets 
\begin{align*}
&\hat d = \dfrac{\mu + n\eta^2 \bar Y}{n\eta^2 + 1},\\&\hat \omega = \dfrac{1}{n}\sum_{i=1}^n (Y_i - \bar Y)^2 + \dfrac{\eta^2}{n\eta^2 + 1} + \left(\dfrac{\mu - \bar Y}{n\eta^2 + 1}\right)^2,\\ &\hat\sigma^2 = \frac{\eta^2}{n(n\eta^2+1)}\sum_{i=1}^n (Y_i - \bar Y)^2 +  \left(\frac{\eta^2}{n\eta^2+1}\right)^2 + \\&\qquad \dfrac{\eta^2(\mu - \bar Y)^2}{(n\eta^2 + 1)^3}.
\end{align*}
Letting $n$ go to infinity, the point estimate is equivalent to the sample average $\bar Y$ and the variance estimate is equivalent to $\sigma_0^2/n$, i.e. the asymptotic inference that would be obtained under a correct model specification. If we had instead relied on the assumed model, then using inference loss~(\ref{eq:inference-loss}) the variance estimate is equivalent to $1/n$, i.e wrong by a scale factor of $\sigma_0^2$. The underlying approach remains fully parametric, and retains the appealing normative element of decision theory, where a clear statement about the goal of the analysis (here, balancing estimation of $\theta$ with fidelity to the data) makes the our choice of what to report (i.e. $\hat d$) both automatic and optimal.



\subsection{Estimating the variance of Generalized Linear Model regression coefficient estimates}
  \label{sec:glm}
The balanced inference loss function takes an appealingly straightforward form when the data are independent observations $Y_1, \dots, Y_n$, from a generalized linear model (GLM), i.e. an exponential family where the distribution has the density 
$$p_{\theta, \alpha}(Y_i)= \exp\left(\dfrac{Y_i\theta_i - b(\theta_i)}{\alpha}+c(Y_i, \alpha)\right)$$
for functions $b(\cdot)$, $c(\cdot, \cdot)$ and scalars $\theta$ and $\alpha$~\citep{mccullagh1989generalized}. We assume the presence of $p$-dimensional explanatory variables $\bm X_i$ augmenting each observation $y_i$, with a link function $g(\cdot)$ connecting the mean function $\mu_i = E_{P_{\theta, \alpha}}(Y_i)=\partial b(\theta_i)/\partial\theta_i$ and the linear predictors $\bm X_i^T\beta$ via $g(\mu_i)=\bm X_i^T\beta$, where $\beta$ is a $p$-vector of regression coefficients. We write $\bm Y = (Y_1, \dots, Y_n)^T$ and $\bm{\mathsf{X}} = (\bm X_1, \dots, \bm X_n)^T$. For ease of development, we focus on inference for $\beta$ and assume the dispersion parameter $\alpha$ is known. In the two examples below, this assumption does not alter the results, either because $\alpha$ is a fixed at a known constant or because the two parameters are orthogonal~\citep{cox1987parameter}. For simplicity, here we will not distinguish between parameters used in frequentist or Bayesian analyses.

With a GLM, the score function with respect to $\beta$ for a single observation is
$$\dot l_i(\beta) = \dfrac{\partial \mu_i}{\partial\beta}\dfrac{Y_i - \mu_i}{\alpha V_i}$$ and the empirical Fisher information is 
\begin{align*}
    &I_n(\beta) 
    = -\dfrac{1}{n}\sum_{i=1}^n \bigg\{\dfrac{\partial^2\mu_i}{\partial\beta\partial\beta^T}\left(\dfrac{Y_i - \mu_i}{\alpha V_i}\right) -\\&\quad \dfrac{1}{\alpha V_i}\dfrac{\partial\mu_i}{\partial\beta}\left(\dfrac{\partial\mu_i}{\partial\beta}\right)^T - \dfrac{\partial\mu_i}{\partial\beta}\left(\dfrac{Y_i - \mu_i}{\alpha V_i^2}\right)\left(\dfrac{\partial V_i}{\partial\beta}\right)^T\bigg\},
\end{align*}
where $V_i = \partial\mu_i/\partial\theta_i$. When the canonical link function is use, i.e. $\theta_i=X_i^T\beta$, $\dot l_i(\beta)$ and $I_n(\beta)$ can be considerably simplified as
\begin{align*}
    \dot l_i(\beta) = X_i(Y_i - \mu_i)/\alpha, \quad I_n(\beta) = \dfrac{1}{n}\sum_{i=1}^n X_iX_i^TV_i/\alpha.
\end{align*}
Therefore, with the canonical link, the balanced inference loss function is

\begin{align}\label{eq:bil-glm}
&L_{BI}(\beta,d, \Sigma, \Omega) =\log\mid\Sigma\mid + \\&\qquad(\beta - d)^T\Omega\Sigma^{-1}(\beta - d) +  \nonumber\\&\qquad\sum_{i=1}^n \frac{(Y_i - \mu_i)^2}{\alpha V_i}\cdot \bm{X}_i^T\left\{\Omega \bm{\mathsf X}^T\bm{\mathsf X}\right\}^{-1}\bm X_i,\nonumber\end{align}
in which we observe that the outcome data enter the balancing `lack of fit' term via the terms $(Y_i-\mu_i)^2/\alpha V_i$, which can be thought of as a Bayesian analog of (squared) Pearson residuals. The other components in the balancing term determine how much weight these `squared' terms receive. 
The balanced inference loss's Bayes rule sets $d$ to be the usual posterior mean, and 
\begin{align*}
&\hat\Sigma_n = Var_{\Pi_n}(\beta\mid\bm{\mathsf{Z_n}})E_{\Pi_n}\big\{\bm{\mathsf{X}}^T\bm{\mathsf{S}}\bm{\mathsf{X}}\left(\bm{\mathsf{X}}^T\bm{\mathsf{V}}\bm{\mathsf{X}}\right)^{-1} / \alpha\mid\bm{\mathsf{Z_n}}\big\},\end{align*}
where $\bm{\mathsf V}$ is the $n\times n$ diagonal matrix with the $i$-th diagonal element $V_i$ and $\bm{\mathsf{S}}$  with elements $(Y_i - \mu_i)^2$. $\hat\Sigma_n$ can be thought of as the model-based posterior variance `corrected' by a term that assesses fidelity of the data to the model. As with the general case, asymptotic equivalence with the classical `robust' approach holds.

\subsubsection*{Example: Linear regression} 
~~For observations $Y_1, \dots, Y_n$ and the explanatory variables $\bm X_1, \dots, \bm X_n$, a Bayesian linear regression model assumes $$Y_i\mid\bm X_i, \beta, \sigma^2\sim N(\bm X_i^T\beta, \sigma^2)$$where $\sigma^2$ is unknown. The corresponding balanced inference loss is 
\begin{align*}
&L_{BI}(\beta, d, \Sigma, \Omega) = \log\mid\Sigma\mid + (\beta - d)^T\Omega\Sigma^{-1}(\beta - d) +\\
&\qquad \sum_{i=1}^n \frac{(Y_i - \bm X_i^T\beta)^2}{\sigma^2}\cdot \bm X_i^T\left\{\Omega\sum_{j=1}^n\bm X_j \bm X_j^T\right\}^{-1}\bm X_i,\end{align*}in which the novel penalty term is a sum of squared Pearson errors, weighted by a modified form of leverage~(\cite{mccullagh1989generalized} Section 12.7.1, page 405) which includes a term $\Omega$ inside the diagonal elements of the familiar `hat' matrix $\bm{\mathsf X} (\bm{\mathsf X}^T\bm{\mathsf X})^{-1} \bm{\mathsf X}^T$. The Bayes rule for $\Sigma$ is the Bayesian robust variance 
\begin{align*}
&\hat\Sigma_n = Var_{\Pi_n}(\beta\mid\bm{\mathsf{Z_n}})E_{\Pi_n}\left\{\bm{\mathsf{X}}^T\bm{\mathsf{S}}\bm{\mathsf{X}}(\bm{\mathsf{X}}^T\bm{\mathsf{X}})^{-1}/\sigma^2\,\mid\,\bm{\mathsf{Z_n}}\right\}.\end{align*}

We perform a simulation study to investigate the performance of Bayesian robust variance for linear regression. We generate univariate explanatory variables $\bm X_1, \dots, \bm X_n$ as $\bm X_i = (1, U_i)^T$ where $U_i\sim U(0, 3)$ independently, and generate outcome variables as $Y_i\mid\bm X_i \sim N(U_i + a U_i^2, 1)$ for $i=1,\dots, n$, for various constants $a$. When $a\neq 0$, the mean model, $E(Y\mid\bm X=\bm x)$ is quadratic in $\bm x$, and hence the model is misspecified. 

We suppose interest lies in linear trend parameter $\beta_2$ and $\hat d$ is the posterior mean, i.e. the Bayes rule for its estimation. We use weakly informative priors $\beta_j\sim N(0, 10^3)$ for $j=1,2$ and $\sigma^{-2}\sim \text{Gamma}(0.1, 0.1)$. For 1000 simulations under each setting, we report the average  posterior mean ($Ave(\hat d)$), the standard error of $\hat d$ ($SE(\hat d)$), the average posterior standard deviation of $\vartheta$ ($Ave(SD_{\Pi_n}(\vartheta\mid\bm{\mathsf{Z_n}}))$), and the average Bayesian robust standard error estimate ($Ave(\widehat{BRSE}(\hat d))$).

Table~\ref{tab:brse-lm} compares the posterior standard deviation and the Bayesian robust standard error estimate, as estimates of the standard error of the Bayes rule. In all the scenarios except when $a=0$ (noted in grey, indicating that the linear regression model is correctly specified), while there is a recognizable difference between the true standard error and the posterior standard deviation, the Bayesian robust standard error estimate is notably closer to the true standard error. 

\begin{table}[!htbp]
\caption{\label{tab:brse-lm} Comparison between posterior standard deviation and the Bayesian robust standard error estimate to the true standard error of the Bayes point estimate $\hat d$ in linear regression. Ave($\hat d$): Monte Carlo average of $\hat d$; $SE(\hat d)$: Monte Carlo standard deviation of $\hat d$; Ave(Post.SD): average posterior standard deviation; Ave($\widehat{BRSE}$): Monte Carlo average of Bayesian robust standard error. Gray rows indicate where the model is correctly specified.}
\centering
\resizebox{0.98\linewidth}{!}{
    \begin{tabular}{rr|rrrrrrr}
  \hline
n & $a$ & Ave($\hat d$) & $SE(\hat d)$ &  Ave(Post.SD) & $Ave(\widehat{BRSE})$ \\ 
  \hline
50 & -2 & -4.996 & 0.332 &  0.282 &  0.331 \\ 
  50 & -1 & -2.000 & 0.216 &  0.203 & 0.220 \\ 
\rowcolor{Gray}  50 & 0 & 1.009 & 0.171 &  0.167 & 0.167 \\ 
  50 & 1 & 3.999 & 0.227 &  0.203 &  0.220 \\ 
  50 & 2  & 6.997 & 0.349 &  0.282 &  0.334 \\ 
  100 & -2  & -4.980 & 0.233 &  0.198 &  0.233 \\ 
  100 & -1 & -2.001 & 0.150 &  0.141 &  0.153 \\ 
\rowcolor{Gray}   100 & 0  & 1.001 & 0.115 &  0.117 &  0.117 \\ 
  100 & 1 & 4.008 & 0.152 & 0.141 & 0.153 \\ 
  100 & 2  & 6.994 & 0.223 &  0.197 &  0.231 \\
   \hline
\end{tabular}}
\end{table}
\subsubsection*{Example: Poisson regression} 

~~Poisson regression is a default method for studying the association between count outcomes and the explanatory variables~\citep{agresti2012categorical}. With count observations $Y_1, \dots, Y_n$ and corresponding  explanatory variables $\bm X_1, \dots, \bm X_n$  (which may contain an intercept) a Poisson regression model assumes $Y_i\mid\bm X_i \sim \mathrm{Poisson}\left(\exp(\bm X_i^T\beta)\right)$. The balanced inference loss is
\begin{small}\begin{align*}&L_{BI}(\beta, d, \Sigma, \Omega) = \log\mid\Sigma\mid + (\beta - d)^T\Omega\Sigma^{-1}(\beta - d) +\\&\sum_{i=1}^n \{Y_i - \exp(\bm X_i^T\beta)\}^2\times \\&\quad \bm X_i^T\left\{\Omega\sum_{j=1}^n\bm X_j\bm X_j^T\exp(\bm X_j^T\beta)\right\}^{-1}\bm X_i,\end{align*} 
\end{small}
which is again interpretable as a weighted sum of squared Pearson errors, and the weights are again modified versions of leverage. The Bayes rule for $\Sigma$ is another Bayesian robust variance estimate, 
\begin{align*}
&\hat\Sigma_n = Var_{\Pi_n}(\beta\mid\bm{\mathsf{Z_n}})E_{\Pi_n}\bigg\{\bm{\mathsf{X}}^T\bm{\mathsf{S}}\bm{\mathsf{X}}\left(\bm{\mathsf{X}}^T\bm{\mathsf{V}}\bm{\mathsf{X}}
\right)^{-1}\mid \bm{\mathsf{Z_n}}\bigg\},
\end{align*}
where the diagonal elements of $\bm{\mathsf V}$ are $\exp(\bm X_i^T\beta)$.

To evaluate the method, as before we generate explanatory variables $\bm X_i = (1, U_i)^T$, where $U_i\sim U(-3, 3)$, and generate the outcomes as $Y_i\mid\bm X_i \sim \mathrm{Poisson}(U_i + a U_i^2)$, for various constants $a$ and $i=1,\dots, n$. The model is mis-specified when $a\neq 0$. We assume the parameter of interest is the log risk ratio $\theta = \beta_2$ and $\hat d$ is the posterior mean, the corresponding Bayes rule. We use weakly informative priors $\beta_j\sim N(0, 10^3)$ for $j=1,2$. For each setting we perform 1,000 replicates, and report the same measures as for linear regression.

Table \ref{tab:brse-poisson} shows the results. Again, the Bayesian robust standard error estimate is a notably better estimate of the true standard error of $\hat d$ than the posterior standard deviation of $\vartheta$; it performs better when the model is misspecified and no worse when correctly specified.

\begin{table}[!htbp]
\centering
\caption{\label{tab:brse-poisson} For a Poisson model, comparison of the posterior standard deviation and the Bayesian robust standard error estimate to the true standard error of the posterior mean $\hat d$. Ave($\hat d$): Monte Carlo average of $\hat d$; $SE(\hat d)$: Monte Carlo standard deviation of $\hat d$; Ave(Post.SD): average posterior standard deviation; Ave($\widehat{BRSE}$): Monte Carlo average of Bayesian robust standard error. Gray rows indicate where the model is correctly specified.}
\resizebox{0.98\linewidth}{!}{
\begin{tabular}{rr|rrrrrrr}
  \hline
n & $a$ & Ave($\hat d$) & $SE(\hat d)$ &  Ave(Post.DF) & Ave($\widehat{BRSE}$) \\ 
  \hline
50 & -0.50 &  0.348 & 0.101 &  0.115 & 0.100 \\ 
  50 & -0.25 &  0.576 & 0.091 &  0.098  & 0.090 \\ 
\rowcolor{Gray}  50 & 0.00 & 1.014 & 0.086 &  0.085  & 0.084 \\ 
  50 & 0.25 &  1.788 & 0.134 &  0.070  & 0.118 \\ 
  50 & 0.50 &  2.870 & 0.296 &  0.049 & 0.181 \\ 
  100 & -0.50 &  0.344 & 0.066 &  0.080  & 0.068 \\ 
  100 & -0.25 &  0.567 & 0.060 &  0.068  & 0.062 \\ 
\rowcolor{Gray}  100 & 0.00 & 1.005 & 0.059 & 0.058  & 0.058 \\ 
  100 & 0.25 &  1.808 & 0.089 &  0.049 & 0.084 \\ 
  100 & 0.50 &  2.954 & 0.163 & 0.034  & 0.136 \\ 
  \hline
\end{tabular}}
\end{table}

\subsection{Estimating the variance of log hazards ratio with an exponential proportional hazards model}\label{sec:eph}

We demonstrate how our framework can be extended beyond GLMs with an association analysis for time-to-event outcomes and  explanatory variables. 

Suppose the observed data are $(\widetilde T_i, \bm X_i, \Delta_i)$ for $i=1,\dots, n)$, where $\widetilde T_i = \min(T_i, C_i)$ is the observed survival time, $T_i$ is the true survival time, $\Delta_i = I(T_i \leq C_i)$ is the event indicator,  $C_i$ is the censoring time and $\bm X_i$ is a $p$-vector of covariates. 

Suppose we intend to model the association between the survival time and the covariates using the exponential proportional hazards model $$T_i\mid\bm X_i,\beta\sim \mathrm{Exponential}(\exp(\bm X_i^T\beta)).$$  

The balanced inference loss is
\begin{align*}&L_{BI}(\beta, d, \Sigma, \Omega) = \log\mid\Sigma\mid + (\beta - d)^T\Omega\Sigma^{-1}(\beta - d) + \\&\qquad \sum_{i=1}^n (\Delta_i - \widetilde T_i\exp(\bm X_i^T\beta))^2 \times \\&\qquad \bm X_i^T\left[\Omega\left\{\sum_{j=1}^n \widetilde T_j\exp(\bm X_j^T\beta)\bm X_j\bm X_j^T\right\}\right]^{-1}\bm X_i\end{align*}

In this case, the interpretation of the balancing term is less clear than before. We note, however, for those subjects who experienced  events (i.e. $\Delta_i = 1$), their contribution to $L_{BI}$ can be written as
\begin{align*}&\dfrac{\{\widetilde T_i - \Delta_i\exp(-\bm X_i^T\beta)\}^2}{\exp(-2\bm X_i^T\beta)}\times \\&\qquad \bm X_i^T\left[\Omega\left\{\sum_{j=1}^n \widetilde T_j\exp(\bm X_j^T\beta)\bm X_j\bm X_j^T\right\}\right]^{-1}\bm X_i,\end{align*}
which, again, is their Pearson residual weighted by some version of their leverage.

The Bayes rule for $\Sigma$ is a Bayesian robust covariance matrix 
\begin{align*}&\hat \Sigma_n =Var_{\Pi_n}(\beta\mid\bm{\mathsf{Z_n}})E_{\Pi_n}\bigg\{\bm {\mathsf X}^T\bm{\mathsf S}\bm{\mathsf{X}}\left(\bm{\mathsf{X}}^T\bm{\mathsf V}\bm{\mathsf{X}}\right)^{-1}\mid\bm{\mathsf{Z_n}}\bigg\}\end{align*}
where $\bm{\mathsf S}$ is an $n\times n$ diagonal matrix with elements $\{\Delta_i - \widetilde T_i\exp(\bm X_i^T\beta)\}^2$ and $\bm{\mathsf V}$  with elements $\widetilde T_i\exp(\bm X_i^T\beta)$.

In a simulation study to investigate the robustness of different variance estimates in exponential proportional hazards models, we generated the covariates as $\bm X_i = (1, U_i)^T$ where $U_i\sim U(0, 3)$, for $i=1,\dots, n$, and the survival times by a Weibull proportional hazards model $$T_i\mid\bm X_i, \beta \sim \mathrm{Weibull}(\kappa, \exp(\bm X_i^T\beta)),$$
where the pdf of a $\mathrm{Weibull}(\kappa, \lambda)$ random variable is given by
$$f(t) = \lambda\kappa t^{\kappa-1}\exp(-\lambda t^\kappa),\qquad t>0.$$ We further performed administrative censoring at $C_i=10$, $i=1,\dots, n$. 
Our use of Exponential proportional model deviates from the true data generating distribution if $\kappa\neq 1$.

We suppose the parameter of interest is the log hazards ratio $\theta = \beta_2$ in the model and use the prior distribution $\beta_,\sim N(0, 10^ 3)$, $j=1,2$. Again, we conduct 1000 simulations under each setting considered, and report the same measures as before.

Results are given in Table~\ref{tab:brse-expph}. Although slightly underestimating the standard error of the Bayes estimator $\hat d$ when the model is correctly specified, in general the Bayesian robust standard error estimate is a better estimate of the true standard error of $\hat d$  than the posterior standard deviation of $\vartheta$.

\begin{table}[!htbp]
\centering
\caption{\label{tab:brse-expph} For an Exponential Proportional Hazards Model, comparison of the posterior standard deviation and the Bayesian robust standard error to the true standard error of the posterior mean $\hat d$. \#E: Average number of events; Ave($\hat d$): Monte Carlo average of $\hat d$; $SE(\hat d)$: Monte Carlo standard deviation of $\hat d$; Ave(Post.SD): average posterior standard deviation; Ave($\widehat{BRSE}$): Monte Carlo average of Bayesian robust standard error estimate. Gray rows indicate where the model is correctly specified. }
\resizebox{0.98\linewidth}{!}{
\begin{tabular}{rrrr|rrrrrrr}
  \hline
n & $\kappa$ & $\beta$ & \#E & Ave($\hat d$) & $SE(\hat d)$ &  Ave(Post.SD) & $Ave(\widehat{BRSE})$ \\ 
  \hline

  50 & 0.8 & 0.00 & 49.9 & -0.004 & 0.187 & 0.151 & 0.164 \\ 
  50 & 0.8 & -0.25 & 49.8 & -0.317 & 0.179 & 0.150 & 0.165 \\ 
  50 & 0.8 & -0.50 & 49.4 & -0.608 & 0.185 & 0.152 & 0.165 \\ 
\rowcolor{Gray}  50 & 1.0 & 0.00 & 50.0 & 0.000 & 0.146 & 0.148 & 0.132 \\ 
\rowcolor{Gray}  50 & 1.0 & -0.25 & 49.8 &  -0.250 & 0.151 & 0.149 & 0.134 \\ 
 \rowcolor{Gray} 50 & 1.0 & -0.50 & 49.9& -0.503 & 0.146 & 0.149 & 0.134 \\ 
  50 & 1.5 & 0.00 & 50.0&  0.006 & 0.101 & 0.148 & 0.094 \\ 
  50 & 1.5 & -0.25 & 50.0&  -0.165 & 0.101 & 0.148 & 0.093 \\ 
  50 & 1.5 & -0.50 & 50.0& -0.337 & 0.102 & 0.146 & 0.093 \\ 
  100 & 0.8 & 0.00 & 99.8& 0.003 & 0.130 & 0.103 & 0.119 \\ 
  100 & 0.8 & -0.25 & 99.6 & -0.311 & 0.131 & 0.103 & 0.119 \\ 
  100 & 0.8 & -0.50 & 98.8 & -0.614 & 0.128 & 0.105 & 0.119 \\ 
 \rowcolor{Gray} 100 & 1.0 & 0.00 & 100.0 & -0.006 & 0.102 & 0.102 & 0.095 \\ 
\rowcolor{Gray}  100 & 1.0 & -0.25 & 100.0 & -0.248 & 0.102 & 0.103 & 0.096 \\ 
\rowcolor{Gray}  100 & 1.0 & -0.50 & 99.8 & -0.497 & 0.104 & 0.102 & 0.097 \\ 
  100 & 1.5 & 0.00 & 100.0 & -0.001 & 0.071 & 0.102 & 0.066 \\ 
  100 & 1.5 & -0.25 & 100.0 & -0.170 & 0.070 & 0.102 & 0.066 \\ 
  100 & 1.5 & -0.50 & 100.0&  -0.332 & 0.072 & 0.102 & 0.067 \\ 
  \hline
\end{tabular}}
\end{table}

\subsection{Bayesian robust confidence intervals}

In previous sections, we have shown that (for several commonly used regression models) our Bayesian robust standard error estimate is approximately unbiased for the true standard errors of model parameter estimators over the minimal Kullback-Leibler point. We further investigate whether they can be used to construct well-calibrated confidence intervals. Using the posterior mean as estimate $\hat d$ and Bayesian robust variance estimate $\hat\Sigma$, we construct the Bayesian analog of Wald-type `robust' confidence intervals, as described by \citet{royall1986model}. The proposed Bayesian robust confidence interval at confidence level $1-\alpha$ for $\theta_j$, $j=1,\dots, p$, is
$$\hat d_{j}\pm z_{1-\alpha/2}\hat \Sigma_{jj}^{\frac{1}{2}}.$$

Figure \ref{fig:int_cov} shows the coverage probabilities of these intervals, together with standard model-based 95\% credible intervals and standard robust confidence intervals, using the examples in Section \ref{sec:glm} and \ref{sec:eph}. We assume the same models and data distributions as in the previous sections. In all the model-misspecification scenarios, the posterior credible interval does not -- as expected -- have the nominal level coverage even in large samples. Both standard frequentist robust confidence intervals and Bayesian robust confidence intervals are well calibrated.

\begin{figure*}
    \centering
    \begin{tabular}{c}
    \includegraphics[width = 0.75\linewidth]{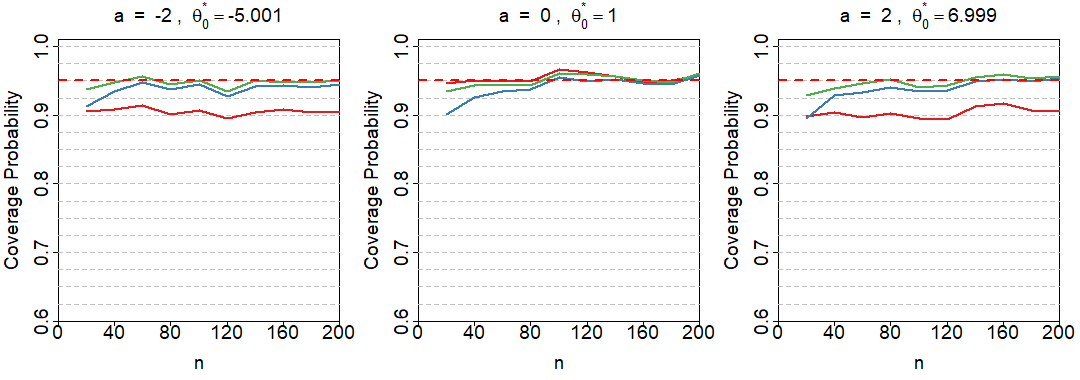}\\
    (a)\\
    \includegraphics[width = 0.75\linewidth]{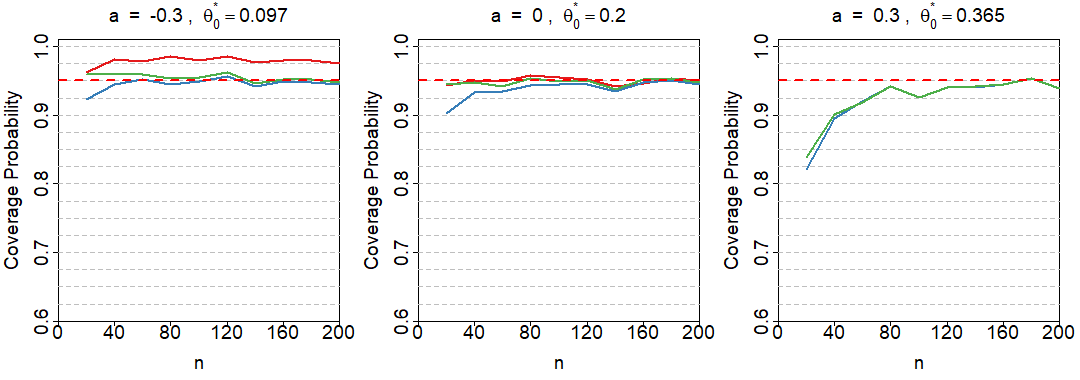}\\
    (b)\\
    \includegraphics[width = 0.75\linewidth]{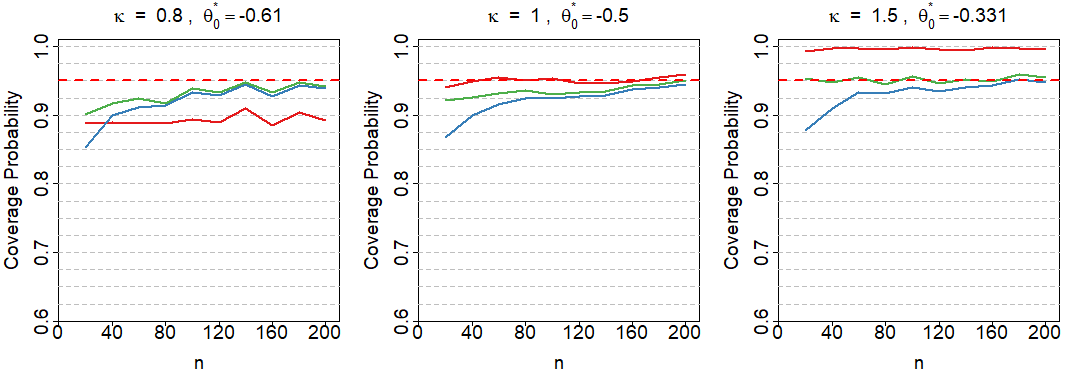}\\
    (c)
    \end{tabular}
    \caption{Coverage probabilities of 95\% credible interval (red), frequentist (blue) and Bayesian (green) robust confidence intervals for (a) linear regression, (b) Poisson regression, and (c) exponential proportional hazards model. The middle column shows the results with  correctly-specified models. $\theta_0^*$ denotes the minimal Kullback-Leibler point in each scenario.}
    \label{fig:int_cov}
\end{figure*}

\section{Application: Association of Systolic Blood Pressure and Age in 2017-2018 NHANES Data}
\label{sec:app}
We demonstrate the Bayesian robust standard error estimator using data from 2017-2018 National Health and Nutrition Examination Survey (NHANES) data. We are interested in the association between subjects' systolic blood pressure systolic and their age among the subpopulation with age between 18 and 60 years old. To make the difference between the variance estimates more distinguishable, we randomly select a subset of 200 subjects for our analysis.

We used a simple linear regression model as a working model, with the average of two systolic blood pressure readings as the outcome and age and gender as covariates (Figure \ref{fig:nhanes-methods} left panel). We computed both the model-based standard error estimates (assuming homoscedasticity) and robust standard error estimates (resistent to heteroscedasticity and/or model misspecification) of the regression coefficients. For the corresponding Bayesian model, we postulate a normal linear regression model with the same outcome and covariates. We use weakly informative prior distributions for the unknown regression coefficients and the residual variance (Figure \ref{fig:nhanes-methods} right panel). We report the posterior mean as point estimates, and compute the posterior standard deviations and Bayesian standard errors for the regression coefficients. For the Bayesian analysis, we used Gibbs sampling in JAGS~\citep{plummer2003jags} with 3 chains and 30,000 iterations for each chain, where the first 18,000 iterations in each chain are burn-in and discarded from the analysis. 

\begin{figure*}[!htbp]
    \centering
    \begin{tabular}{c|c}
    Frequentist & Bayesian \\
    \hline
    \begin{minipage}{0.32\linewidth}
    $$E(\mathrm{SBP}) = \beta_0 + \beta_1 \mathrm{MALE} + \beta_2\mathrm{AGE}$$\end{minipage} &
    \begin{minipage}{0.32\linewidth}
  $$\mathrm{SBP}\sim N(\beta_0 + \beta_1 \mathrm{MALE} + \beta_2\mathrm{AGE}, \sigma^2)$$
    $$\beta_j\sim N(0, 1000),\quad j=0,1,2$$
    $$\sigma^2 \sim InvGamma(0.01, 0.01)$$
    \end{minipage}
    \end{tabular}
    \caption{Frequentist and Bayesian regression models for analyzing the association between systolic blood pressure and age, adjusting for subjects' gender. SBP: systolic blood pressure; MALE: the indicator for male; AGE: subjects age in years.}
    \label{fig:nhanes-methods}
\end{figure*}

In Table~\ref{tab:nhanes-results}, we report the point estimates and the variance estimates of the regression coefficients for the frequentist and the Bayesian linear regression models. We focus on the comparison of the variance estimates. The model-based standard errors/Bayesian posterior standard deviations given higher estimates of the standard error for all three regression coefficients than the frequentist/Bayesian robust standard errors. For the intercept and the regression coefficient for age, the Bayesian posterior standard deviations are nearly equal to the corresponding frequentist model-based standard error estimates, whereas the Bayesian robust standard error estimate are approximately equal to the frequentist robust standard error estimate. The four versions of standard error estimates for the regression coefficient of gender are not as distinguishable.

\begin{table}[!htbp]
\centering
\caption{Point estimates and standard error estimates of regression coefficients in Frequentist and Bayesian linear regression models using the NHANES data. Post. SD: posterior standard deviation; BRSE: Baysian robust standard error estimate.}
\label{tab:nhanes-results}
\resizebox{0.98\linewidth}{!}{
\begin{tabular}{r|rrr|rrr}
  \hline
  &\multicolumn{3}{c|}{Frequentist} & \multicolumn{3}{c}{Bayesian}\\
 & Est. & SE & Robust SE & Est. & Post. SD & BRSE \\ 
  \hline
(Intercept) & 94.067 & 2.350 & 1.781 & 93.481 & 2.307 & 1.767 \\ 
  Male & 4.817 & 2.063 & 2.032 & 5.005 & 2.062 & 2.050 \\ 
  Age (yrs) & 0.570 & 0.046 & 0.042 & 0.580 & 0.046 & 0.042 \\
   \hline
\end{tabular}}
\end{table}

Code for the analysis and the dataset are available on GitHub (\url{https://github.com/KenLi93/BRSE}).

\section{Discussion}
\label{sec:discussion}
In this article, we have proposed a balanced inference loss function, whose Bayes rules provide a Bayesian analog of the robust variance estimator. We have proven how, in large samples and under only mild regularity conditions, the Bayesian robust variance estimate converges to the true variance, over repeated experiments, of the estimator of interest. In our examples of GLMs and other forms of models, the corresponding loss function is seen to be a straightforward balance between losses for standard model-based inference, and a term that assesses how well the corresponding model actually fits the data that have been observed; the penalty employs terms and components that are familiar from diagnostic use of Pearson residuals, and from study of leverage. Using these methods we also proposed the Wald confidence intervals using Bayesian robust standard error, and found attractively stable behavior in simulation. 

Our development of the robust variance estimates is fully parametric, pragmatically adjusting the standard model-based inference with a measure of how the corresponding model fits the data. Our proposed method preserves the same parameter of interest during the modeling and estimation and the interpretation of the point estimates stays the same as in regular parametric statistics. The proposed Bayesian variance estimates do not require nonparametric Bayes components in the model specification as in \citet{szpiro2010model} or additional second-order moments as in \citet{hoff2013bayesian}, thus reducing the computation time and increasing the algorithmic stability. Our framework is general and can be applied to obtain the Bayesian robust variance for any regular parametric models.

We should note that the data-dependent nature of the lack of fit term in our loss function places the proposed balanced inference loss function outside of classical decision theory~\citep{wald1949statistical,savage1951theory,parmigiani2009decision}.  Data-dependent losses have been developed for other uses~\citep{zellner1994bayesian,barron2019general,baik2021meta}  but do not automatically enjoy the coherence properties of Bayes rules from losses involving only parameters and decisions. Addressing how the degree and form of data-dependency in losses affects coherence and consistency properties of resulting Bayes rules remains an open and interesting research area.

We discussed regression models with random covariates in Section~\ref{sec:sim}, but our framework does not entertain having fixed covariates, i.e. considering random variability due to the outcomes alone and conditioning on the covariate values. For a full discussion of the impact of fixed versus random covariates, see \citet{buja2019modelsi} and \citet{buja2019modelsii}. We note, however, that model mis-specification when combined with random covariates, contributes an asymptotically non-negligble component of the variance of the estimates~\citet[\S 7.3]{buja2019modelsii}. Standard sandwich estimates account for this component, so using them when considering covariates are fixed tends to bias them upwards, i.e. makes them conservative. The degree of conservatism tends to be minor, but for robust frequentist methods tailored to fixed covariate settings see e.g. \citet{abadie2014inference}.

A general Bayesian analog of such methods remain undeveloped. (For linear regression with fixed covariates, \citet{szpiro2010model} extend their Bayesian sandwich by additionally assuming a categorical model for the fixed regressors -- but this approach will be ineffective under our approach since the likelihood of the regression coefficients do not depend on the marginal density function of the regressors.) In Section E of the Supplementary Material we show by simulation that the mild conservatism noted above does seem to carry over to Bayesian robust standard errors, in settings like that of Section~\ref{sec:glm} but with fixed regressors. This suggests that developing Bayesian analog of a fixed-covariate sandwich, while an interesting  challenge, is unlikely to drastically alter the results in any given analysis.

An area of potential development that seems more fruitful is generalizing the balanced inference loss~(\ref{eq:balanced-inference}) to other measures of discrepancy, both between parameter and estimate as well as in the model fit term. We have used forms of squared error loss in both roles, but absolute or $L_1$ losses can be expected to provide methods that are less influenced by the largest discrepancies over which we average. This robustness to extreme values (studied in depth by e.g. \citet{huber2004robust}) is distinct from the robustness to model assumptions we achieve, but both have appeal. Two major difficulties facing the use of $L_1$ losses include computation and the lack of an obvious way to generalize them from one-dimensional absolute losses to $L_1$ penalties over $p$ parameters, which will typically be correlated in the posterior.

Our work in this paper demonstrates that robust standard error estimates can be ubiquitous both in frequentist and in Bayesian statistics. Our approach enables the proper quantification of the variability of parameter estimates in Bayesian parametric models. The proposed balance inference loss function, through which the Bayesian robust standard error estimate is derived, also provides insights on the source of discrepancy between the model-based and model-agnostic variance estimates.

\bibliographystyle{imsart-nameyear} %
\bibliography{references}       %

\end{document}


\setcounter{table}{0}
\renewcommand{\thetable}{S\arabic{table}}%
\setcounter{figure}{0}
\renewcommand{\thefigure}{S\arabic{figure}}%
\setcounter{section}{0}
\renewcommand{\thesection}{\Alph{section}}
\setcounter{assumption}{0}
\renewcommand{\theassumption}{S\arabic{assumption}}
\setcounter{model}{0}
\renewcommand{\themodel}{S\arabic{model}}
\setcounter{theorem}{0}
\renewcommand{\thetheorem}{S\arabic{theorem}}
\setcounter{lemma}{0}
\renewcommand{\thelemma}{S\arabic{lemma}}
\setcounter{equation}{0}
\renewcommand{\theequation}{S\arabic{equation}}

\maketitle
\tableofcontents
\newpage
\section{Proof of Theorems}\label{append:proof}
We write $\mathbbm{1}$ as the indicator function: $\mathbbm 1(A)=1$ if $A$ is true and $0$ if otherwise.
\subsection{Proof of Proposition~2.1}\label{append:inf-loss}
The posterior risk is \begin{align*}
    &E_{\Pi_n}(L\mid\bm{\mathsf{Z_n}}) \\
    =&\log\mid\Sigma\mid + E_{\Pi_n}\left\{(\vartheta - d)^T\Sigma^{-1}(\vartheta - d)\mid\bm{\mathsf{Z_n}}\right\}\\
    =&\log\mid\Sigma\mid+tr\left[\Sigma^{-1}E_{\Pi_n}\left\{(\vartheta - d)(\vartheta - d)^T\mid\bm{\mathsf{Z_n}}\right\}\right].
\end{align*}
Writing $A = \Sigma^{-1}E_{\Pi_n}\left\{(\vartheta - d)(\vartheta - d)^T\mid\bm{\mathsf{Z_n}}\right\}$, the posterior risk is 
$$\log \lvert E_{\Pi_n}\left\{(\vartheta - d)(\vartheta - d)^T\mid\bm{\mathsf{Z_n}}\right\} \rvert - \log\lvert A \rvert + tr(A)$$
in which the last two terms are respectively the sum of the $-\log$ eigenvalues and the eigenvalues of $A$. This is minimized by setting $A$ so that all eigenvalues equal to 1, which occurs if and only if $A$ is the identity matrix. Hence $\Sigma = E_{\Pi_n}\left\{(\vartheta - d)(\vartheta - d)^T\mid\bm{\mathsf{Z_n}}\right\}$.

To minimize the posterior risk with respect to $d$, it remains to consider
\begin{align*}
&\log\lvert E_{\Pi_n}\left\{(\vartheta - d)(\vartheta - d)^T\mid\bm{\mathsf{Z_n}}\right\}\rvert + p \\=& \log\lvert Var_{\Pi_n}(\vartheta\mid\bm{\mathsf{Z_n}}) + \{E_{\Pi_n}(\vartheta \mid \bm{\mathsf{Z_n}}) - d\}\{E_{\Pi_n}(\vartheta \mid \bm{\mathsf{Z_n}}) - d\}^T\rvert + p.\end{align*}
By matrix determinant lemma, the above expression equals
\begin{align*}
&\log\lvert Var_{\Pi_n}(\vartheta\mid\bm{\mathsf{Z_n}})\rvert + \log\bigg[1 + \left\{E_{\Pi_n}(\vartheta\mid\bm{\mathsf{Z_n}}) - d\right\}^TVar_{\Pi_n}(\vartheta\mid\bm{\mathsf{Z_n}})^{-1}\left\{E_{\Pi_n}(\vartheta\mid\bm{\mathsf{Z_n}}) - d\right\}\bigg] + p\end{align*}
which is minimized by setting $d = E_{\Pi_n}\left(\vartheta\mid\bm{\mathsf{Z_n}}\right)$, which in turn means setting $\Sigma = Cov_{\Pi_n}(\vartheta\mid\bm{\mathsf{Z_n}})$. This rule achieves minimized posterior risk $E_{\Pi_n}(L\mid\bm{\mathsf{Z_n}})=\log\lvert Var_{\Pi_n}(\vartheta\mid\bm{\mathsf{Z_n}})\rvert+p$.

\subsection{Proof of Proposition~3.1}

The expected posterior loss is
\begin{align}
    &E_{\Pi_n}(L_{BI}\mid\bm{\mathsf{Z_n}}) = \log\lvert\Sigma\rvert + \label{eq:bil-risk}\\ &\qquad E_{\Pi_n}\left\{(\vartheta - d)^T\Omega\Sigma^{-1}(\vartheta - d)\mid \bm{\mathsf{Z_n}}\right\} + E_{\Pi_n}\left[\dfrac{1}{n}\sum_{i=1}^n \dot l_i(\vartheta)^T\{\Omega I_n(\vartheta)\}^{-1}\dot l_i(\vartheta)\mid\bm{\mathsf{Z_n}}\right]\nonumber
\end{align}

Similar to the proof of Proposition~2.1, the minimum of (\ref{eq:bil-risk}) with respect to $\Sigma$ with $\Omega$ fixed is achieved by setting
$\Sigma = E_{\Pi_n}\left\{(\vartheta - d)(\vartheta - d)^T \mid \bm{\mathsf{Z_n}}\right\}\Omega$ and $d = E_{\Pi_n}(\vartheta\mid\bm{\mathsf{Z_n}})$. Substituting $E_{\Pi_n}\left\{(\vartheta - d)(\vartheta -d)^T \mid  \bm{\mathsf{Z_n}}\right\}\Omega$ and $\hat d = E_{\Pi_n}(\vartheta\mid \bm{\mathsf{Z_n}})$ in Equation (\ref{eq:bil-risk}), we obtain that the minimal expected posterior loss with $\Omega$ fixed is
\begin{align}
    \label{eq:bil-loss-Omega}&\log \lvert Var_{\Pi_n}(\vartheta\mid \bm{\mathsf{Z_n}}) \rvert + p + \log \lvert\Omega\rvert + E_{\Pi_n}\left[\dfrac{1}{n}\sum_{i=1}^n \dot l_i(\vartheta)^T\{\Omega I_n(\vartheta)\}^{-1}\dot l_i(\vartheta) | \bm{\mathsf{Z_n}}\right]
\end{align}

Again using the same method as in the proof of Proposition~2.1, the minimum of (\ref{eq:bil-loss-Omega}) is achieved by setting
$\Omega = E\left\{\dfrac{1}{n}\sum_{i=1}^n \dot l_i(\vartheta)\dot l_i(\vartheta)^TI_n^{-1}(\vartheta) \mid \bm{\mathsf{Z_n}}\right\}$, which in turn gives the Bayes rule of $\Sigma$:
$$\hat\Sigma = Var_{\Pi_n}(\vartheta | \bm{\mathsf{Z_n}}) E_{\Pi_n}\left\{\dfrac{1}{n}\sum_{i=1}^n \dot l_i(\vartheta)\dot l_i(\vartheta)^TI_n^{-1}(\vartheta) \mid \bm{\mathsf{Z_n}}\right\}.$$

\subsection{Proof of Theorem~3.2}\label{append:proof-brse}
Let $I(\theta) = -E_0\left\{\frac{\partial^2}{\partial\theta^2}\log p_{\theta}(Z_1)\right\}$ be the Fisher information. By the Bernstein-von-Mises Theorem under model misspecification~\citep{kleijn2012bernstein}, $nVar_{\Pi_n}(\vartheta\mid\bm{\mathsf{Z_n}})\approx I_{\theta^*}^{-1}$ where $\theta^* = \argmin_\theta E_0\left\{\dfrac{p_0(\bm Z_i)}{p_\theta(\bm Z_i)}\right\}$ is the minimal Kullback-Leibler point. We see that, by Law of Large Numbers and Continuous Mapping Theorem, $$\dfrac{1}{n}\sum_{i=1}^n \dot l_i(\theta)\dot l_i(\theta)^TI_n(\theta)^{-1}\rightarrow_{P_0} E_0\{\dot l_i(\theta)\dot l_i(\theta)^T\}I(\theta)^{-1}$$ for each  $\theta\in\Theta$. The theorem below concludes that under regularity conditions in Section~D of the Supplementary Materials, we have

\begin{align*}
&E_{\Pi_n}\left\{\sum_{i=1}^n \dot l_i(\vartheta)\dot l_i(\vartheta)^TI_n(\vartheta)^{-1} \mid \bm{\mathsf{Z_n}}\right\}\overset{P_0}{\rightarrow}E_0\{\dot l_i(\theta^*)\dot l_i(\theta^*)^T\}I(\theta^*)^{-1}.\end{align*}

\begin{lemma}
\label{lemma:brcm}
Suppose $A_n(\theta)$ is a random function and measurable on $(\bm Z_1, \dots, \bm Z_n)$.  Suppose the conditions in \citet{kleijn2012bernstein} Theorem 2.3 and Corollary 4.2 hold. Further assume the following conditions hold:
\begin{enumerate}
    \item $A$ is continuous in a neighborhood of $\theta^*$ and $\lVert A(\theta^*)\rVert <\infty$.
    \item Uniform convergence of $A_n$ in a neighborhood of $\theta^*$: for any $\epsilon, \gamma >0$, there exists $\eta_0>0$ and positive integer $N$ such that for any  $n>N$, we have $$P_0(\sup_{\theta_1: \lvert  \theta_1 - \theta^* \rvert  < \eta_0}\lVert A_n(\theta_1) - A(\theta_1)\rVert >\epsilon)<\gamma.$$
    \item Asymptotic posterior uniform integrability:
    for any $\epsilon>0$, we have
\end{enumerate}
\begin{small}
\begin{align}
    &\lim_{M\rightarrow \infty}\limsup_{n\rightarrow\infty}P_0\bigg(E_{\Pi_n}\bigg\{ \lVert  A_n(\vartheta) \rVert  \mathbbm{1}\left(  \lVert  A_n(\vartheta) \rVert >M\right)  \mid \bm{\mathsf{Z_n}}\bigg\}>\epsilon\bigg)=0\label{eq:unif-intb}
    \end{align}
\end{small}
Then $E_{\Pi_n}\{A_n(\vartheta)\mid\bm{\mathsf{Z_n}}\}\rightarrow_{P_0} A(\theta^*)$.
\end{lemma}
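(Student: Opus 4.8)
The plan is to show $E_{\Pi_n}\{A_n(\vartheta)\mid\bm{\mathsf{Z_n}}\} - A(\theta^*) \rightarrow_{P_0} 0$ by splitting the posterior integral over the parameter space into a small fixed neighborhood of $\theta^*$ and its complement, and then controlling each piece using a different one of the three hypotheses. Concretely, fix $\epsilon>0$ and a radius $\eta\le\eta_0$, set $U=\{\theta:\lvert\theta-\theta^*\rvert<\eta\}$, and write
\begin{align*}
E_{\Pi_n}\{A_n(\vartheta) - A(\theta^*)\mid\bm{\mathsf{Z_n}}\} = T_1 + T_2,
\end{align*}
where $T_1$ restricts the posterior expectation to $U$ and $T_2$ to its complement. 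The intuition is that the Bernstein--von-Mises theorem forces the posterior to concentrate on $U$, where conditions~1 and~2 pin $A_n(\vartheta)$ down near $A(\theta^*)$, while the only danger is that $A_n$ is unbounded on the complement where the posterior still has (vanishing) mass.

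For the local term $T_1$, I would apply the triangle inequality $\lVert A_n(\vartheta) - A(\theta^*)\rVert \le \lVert A_n(\vartheta) - A(\vartheta)\rVert + \lVert A(\vartheta) - A(\theta^*)\rVert$ pointwise on $U$. On $U$ the first summand is uniformly small with high $P_0$-probability for large $n$ by the uniform-convergence hypothesis (condition~2), and the second summand is made small by shrinking $\eta$ using continuity of $A$ at $\theta^*$ (condition~1). Since the posterior mass of $U$ is at most $1$, this shows $\lVert T_1\rVert$ is bounded by an arbitrarily small constant on a high-probability event.

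The tail term $T_2$ is the main obstacle, precisely because $A_n$ need not be bounded off $U$, so posterior concentration alone is insufficient. I would first peel off the constant,
\begin{align*}
\lVert T_2\rVert \le E_{\Pi_n}\{\lVert A_n(\vartheta)\rVert\,\mathbbm{1}(\vartheta\notin U)\mid\bm{\mathsf{Z_n}}\} + \lVert A(\theta^*)\rVert\,\Pi_n(\vartheta\notin U\mid\bm{\mathsf{Z_n}}),
\end{align*}
where the second piece vanishes since $\lVert A(\theta^*)\rVert<\infty$ and BvM gives $\Pi_n(\vartheta\notin U\mid\bm{\mathsf{Z_n}})\rightarrow_{P_0}0$. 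For the first piece I would introduce a truncation level $M$ and split according to whether $\lVert A_n(\vartheta)\rVert$ exceeds $M$: the bounded part is at most $M\,\Pi_n(\vartheta\notin U\mid\bm{\mathsf{Z_n}})$, which tends to $0$ in $P_0$-probability for each \emph{fixed} $M$ by concentration, while the unbounded part is dominated by $E_{\Pi_n}\{\lVert A_n(\vartheta)\rVert\,\mathbbm{1}(\lVert A_n(\vartheta)\rVert>M)\mid\bm{\mathsf{Z_n}}\}$ and is controlled by the asymptotic posterior uniform integrability hypothesis (condition~3), which makes it small with high probability once $M$ is taken large. The delicacy is the order of quantifiers: $M$ must be held fixed when invoking concentration, so that the product $M\,\Pi_n(\vartheta\notin U\mid\bm{\mathsf{Z_n}})$ still vanishes.

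To assemble the argument I would fix $\epsilon>0$, choose $\eta$ small (for continuity), then $M$ large (for uniform integrability), and finally $n$ large (for uniform convergence and posterior concentration), combining the finitely many high-probability events by a union bound. This yields $P_0\big(\lVert E_{\Pi_n}\{A_n(\vartheta)\mid\bm{\mathsf{Z_n}}\} - A(\theta^*)\rVert>\epsilon\big)\to 0$, which is the claimed convergence in $P_0$-probability.
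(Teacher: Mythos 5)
Your proposal is correct, but it is organized differently from the paper's own proof. The paper argues in two steps, both centered at the random quantity $A_n(\theta^*)$: Step 1 proves the posterior-probability statement $\Pi_n(\lVert A_n(\vartheta)-A_n(\theta^*)\rVert>\epsilon\mid\bm{\mathsf{Z_n}})\rightarrow_{P_0}0$ by combining conditions 1--2 with posterior concentration (Corollary 4.2 of Kleijn and van der Vaart); Step 2 then upgrades this to convergence of the posterior \emph{expectation} by a global truncation argument, splitting $A_n(\vartheta)$ into $A_n(\vartheta)\wedge M$ and its excess and invoking condition 3 --- essentially the classical ``convergence in probability plus uniform integrability implies $L^1$ convergence'' argument transplanted to the sequence of posterior measures, with the pointwise convergence $A_n(\theta^*)\rightarrow_{P_0}A(\theta^*)$ from condition 2 finishing the job. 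You instead make a single spatial decomposition of the posterior integral into a fixed neighborhood $U$ of $\theta^*$ and its complement, center directly at the deterministic limit $A(\theta^*)$, and confine the truncation device to the tail region $U^c$ where it is actually needed. Your route is somewhat more economical: it avoids the intermediate random centering at $A_n(\theta^*)$ (and hence the extra final step converting $A_n(\theta^*)$ to $A(\theta^*)$, which the paper leaves implicit), and your explicit quantifier ordering --- $\eta$ from continuity, then $M$ from uniform integrability, then $n$ from uniform convergence and concentration, so that $M\,\Pi_n(\vartheta\notin U\mid\bm{\mathsf{Z_n}})$ vanishes for fixed $M$ --- is exactly the delicate point, and you handle it correctly. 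What the paper's organization buys in exchange is modularity: its Step 1 is a reusable ``posterior continuous-mapping'' statement, and its Step 2 isolates the uniform-integrability upgrade as a self-contained argument; it also keeps all bounds in terms of posterior expectations of truncated functions, which mirrors the structure of condition 3 most literally. Both proofs use the three hypotheses in the same roles, and yours closes the argument with a clean union bound, so there is no gap.
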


\begin{proof}

\textbf{Step 1:}

We show that $\Pi_n(\lVert A_n(\vartheta) - A_n(\theta^*)\rVert>\epsilon \mid \bm{\mathsf{Z_n}})\rightarrow_{P_0}0$ as $n\rightarrow \infty$, for any $\epsilon>0$. In other words, for any $\epsilon$, $\gamma$, and $\xi > 0$, there exists a positive integer $N>0$, such that for any $n>N$, we have 

\begin{align}&P_0\left(\Pi_n\left( \lVert A_n(\vartheta) - A_n(\theta^*) \rVert  > \epsilon  \mid   \bm{\mathsf{Z_n}}\right)>\xi\right) < 2\gamma.\label{eq:post-convergence}\end{align}

Note that the additional conditions 1, 2 imply that for any $\epsilon, \gamma>0$, there exists $\eta_0>0$ and positive integer $N_1$ such that for any $n>N_1$, we have \begin{equation}\label{eq:unif_cont}P_0(\sup_{\theta_1: \lVert  \theta_1 - \theta^* \rVert < \eta_0}\lVert A_n(\theta_1) - A_n(\theta^*)\rVert>\epsilon)<\gamma.\end{equation}

By Corollary 4.2 in \citet{kleijn2012bernstein}, there exists a positive integer $N_2$ such that $$P_0\left(\Pi_n\left( \lVert \vartheta - \theta^* \rVert \geq \eta_0 \mid \bm{\mathsf{Z_n}}\right)>\xi\right)<\gamma$$ for any $n > N_2$.

Now that the left-hand-side of (\ref{eq:post-convergence}) is
\begin{small}
\begin{align*}
    &P_0\bigg(\Pi_n\bigg( \lVert  A_n(\vartheta) - A_n(\theta^*) \rVert  > \epsilon,  \lVert \vartheta - \theta^* \rVert<\eta_0  |  \bm{\mathsf{Z_n}}\bigg)>\xi\bigg) + \\&\qquad P_0\bigg(\Pi_n\bigg( \lVert  A_n(\vartheta) - A_n(\theta^*) \rVert  > \epsilon, \lVert \vartheta - \theta^* \rVert\geq \eta_0 \mid  \bm{\mathsf{Z_n}}\bigg)>\xi\bigg)\\
    \leq & P_0\bigg(\sup_{\theta: \lVert \theta - \theta^* \rVert  < \eta_0} \lVert  A_n(\theta) - A_n(\theta^*) \rVert  > \epsilon,   \Pi_n\left( \lVert  \vartheta - \theta^* \rVert <\eta_0 \mid  \bm{\mathsf{Z_n}}\right)>\xi\bigg) + P_0\left(\Pi_n\left( \lVert \vartheta - \theta^* \rVert\geq \eta_0 \mid  \bm{\mathsf{Z_n}}\right)>\xi\right)\\
    \leq & P_0\left(\sup_{\theta: \lVert \theta - \theta^* \rVert < \eta_0} \lVert  A_n(\theta) - A_n(\theta^*) \rVert  > \epsilon\right) +  P_0\left(\Pi_n\left(\lVert\vartheta - \theta^* \rVert\geq \eta_0 \mid  \bm{\mathsf{Z_n}}\right)>\xi\right)\\
    < &2\gamma.
\end{align*}
\end{small}

\textbf{Step 2:} 
We show that $$E_{\Pi_n}\{A_n(\vartheta)\mid\bm{\mathsf{Z_n}}\} = A_n(\theta^*) + O_{P_0}(1).$$

Note that the uniform continuity (\ref{eq:post-convergence}) also holds for functions $A_n(\cdot)\wedge M$ with $M>0$. The results of Step 1 also applies to $A_n(\cdot)\wedge M$:

\begin{align*} 
&\Pi_n(\lVert A_n(\vartheta)\wedge M - A_n(\theta^*)\wedge M\rVert >\epsilon \mid \bm{\mathsf{Z_n}}) \rightarrow_{P_0}0 
\end{align*}
for any $\epsilon > 0$.

It suffices to show that $$E_{\Pi_n}(\lVert A_n(\vartheta) - A_n(\theta^*)\rVert\mid\bm{\mathsf{Z_n}})\rightarrow_{P_0}0.$$

By triangular inequality we have
\begin{align*}
   & E_{\Pi_n}(\lVert A_n(\vartheta) - A_n(\theta^*)\rVert \mid\bm{\mathsf{Z_n}})\\ =& E_{\Pi_n}(\lVert A_n(\vartheta) - A_n(\vartheta)\wedge M\rVert \mid\bm{\mathsf{Z_n}}) + E_{\Pi_n}(\lVert A_n(\vartheta)\wedge M - A_n(\theta^*)\wedge M\rVert\mid\bm{\mathsf{Z_n}}) + \\& \lVert A_n(\theta^*)\wedge M - A_n(\theta^*)\rVert.
\end{align*}

Now on the right-hand side, the third term converges in $P_0$ to $\lVert A(\theta^*)\wedge M - A(\theta^*)\rVert$ as $n\rightarrow\infty$ and then converges in $P_0$ to $0$ by letting $M\rightarrow\infty$.

The first term satisfies
\begin{align*}
    &E_{\Pi_n}\left( \lVert  A_n(\vartheta) - A_n(\theta^*) \rVert \bm{\mathsf{Z_n}}\right) \leq  E_{\Pi_n}\left\{ \lVert  A_n(\theta) \rVert \mathbbm{1}\left( \lVert  A_n(\theta)\rVert>M\right) | \bm{\mathsf{Z_n}}\right\},
\end{align*}
which converges to zero by letting $n\rightarrow\infty$ followed by letting $M\rightarrow\infty$.

Finally, the second term is 
\begin{align*}
    &\quad E_{\Pi_n}\bigg( \lVert  A_n(\vartheta)\wedge M - A_n(\vartheta)\wedge M \rVert \mathbbm{1}\left( \lVert \theta - \theta^* \rVert \leq \eta_0\right) \mid \bm{\mathsf{Z_n}}\bigg) +\\
    &\qquad  E_{\Pi_n}\bigg( \lVert  A_n(\vartheta)\wedge M - A_n(\vartheta)\wedge M \rVert  \mathbbm{1}\left( \lVert \vartheta - \theta^* \rVert  \leq \eta_0\right)  \mid \bm{\mathsf{Z_n}}\bigg)\\
    &\leq  \sup_{\theta_1: \lVert\theta_1 - \theta^* \rVert < \eta_0} \lVert A_n(\theta_1)\wedge M - A_n(\theta^*)\wedge M \rVert   \Pi_n\left( \lVert \vartheta - \theta^* \rVert  < \eta_0 \mid  \bm{\mathsf{Z_n}}\right) +  M\cdot \Pi_n\left( \lVert \vartheta - \theta^* \rVert  < \eta_0 \mid \bm{\mathsf{Z_n}}\right)\\
    &\underset{P_0}{\rightarrow} 0 + 0 = 0
\end{align*}
by letting $n\rightarrow \infty$ followed by letting $M\rightarrow \infty$.
\end{proof}

\newpage
\section{Regularity Conditions for Theorem 3.2}
Let $\Theta$ be an open subset of $\mathbb R^p$ parametrizing statistical models $\{P_\theta:\theta\in\Theta\}$ with densities $p_\theta$. Suppose the observations $Z_1,\dots,Z_n$ form an i.i.d. sample from a distribution $P_0$ with density $p_0$ relative to a common dominating measure. Let $\theta^*\in\Theta$ be the minimal Kullback-Leibler point, i.e.,
$$-E_0\{\log p_{\theta^*}(Z_i)/\log p_0(Z_i)\}=\underset{\theta\in\Theta}{\inf}[-E_0\{\log p_{\theta}(Z_i)/\log p_0(Z_i)\}].$$ Assume the conditions in Kleijn et al. (2012) Theorem 2.3 and Corollary 4.2 hold:
\begin{condition}[Stochastic local asymptotic normality; Lemma 2.1 in \citet{kleijn2012bernstein}]\label{rc:lan} 
the function $\theta\mapsto \log p_\theta(Z_i)$ is differentiable at $\theta^*$ in $P_0$-probability with derivative $\dot l_{\theta^*}(Z_i)$ and
\begin{enumerate}[(i)]
    \item there is an open neighborhood $U$ of $\theta^*$ and a square-integrable function $m_{\theta^*}$ such that for all $\theta_1,\theta_2\in U$:
    $$\lvert\log\dfrac{p_{\theta_1}}{p_{\theta_2}}\rvert\leq m_{\theta^*}\lVert\theta_1 - \theta_2\rVert,\quad (P_0-a.s.).$$
    Additionally, we require $E_0\{p_\theta(Z_i)/p_{\theta^*}(Z_i)\}$ for all $\theta\in U$ and $E_0\{\exp(sm_{\theta^*})\}<\infty$ for some $s>0$.
    \item The Kullback-Leiber divergence with respect to $P_0$ has a second-order Taylor expansion around $\theta^*$:
    $$-E_0\left\{\log\dfrac{p_\theta}{p_{\theta^*}}(Z_i)\right\}=\dfrac{1}{2}(\theta - \theta^*)V_{\theta^*}(\theta - \theta^*)+o(\lVert\theta - \theta^*\rVert^2),\quad (\theta\rightarrow \theta^*),$$
    where $V_{\theta^*}$ is a positive-definite $d\times d$-matrix.
\end{enumerate}
    
\end{condition}

\begin{condition}[Conditions for convergence of posterior distribution; Theorem 3.1 of \citet{kleijn2012bernstein}]\label{rc:conv-post} The density function of the prior distribution $\pi(\cdot)$ is continuous and positive in a neighborhood of $\theta^*$. Furthermore, $E_0\{\dot l_{\theta^*}(Z_i)\dot l_{\theta^*}(Z_i)^T\}$ is invertible and for every $\epsilon>0$, there exists a sequence of tests $(\phi_n)$ such that
$$E_0\{\phi_n(Z_1,\dots, Z_n)\}\rightarrow 0,\qquad \underset{\{\theta:\lVert\theta - \theta^*\rVert\geq \epsilon\}}{Q_{\theta}\{1 - \phi_n(Z_1,\dots,Z_n)\}}\rightarrow 0.$$    
where $Q_\theta$ is a  measure defined by
$$Q_\theta(A)=E_0\left(\dfrac{p_\theta}{p_{\theta^*}}\mathbbm 1(A)\right)$$
for all $\theta\in\Theta$.
\end{condition}

\begin{condition}[Smoothness and positivity of prior density around $\theta^*$]\label{rc:smooth-prior}
    Let $V\subset\Theta$ be a measurable subset of $\Theta$. Assume that for all $\epsilon>0$, the prior distribution satisfies
    $$\Pi\left(\vartheta\in\Theta:-E_0\left\{\dfrac{\log p_\vartheta(Z_i)}{\log p_{\theta^*}(Z_i)}\right\}\leq \epsilon\right)>0.$$
\end{condition}

Conditions RC~\ref{rc:lan} roughly states that the statistical model $P_\theta$ is sufficiently ``smooth'' around $\theta^*$ and is the standard condition for the existence of a regular estimator~\cite[\S 7]{van2000asymptotic}. Condition RC~\ref{rc:conv-post} requires that with enough data, the statistical model is identifiable at $\theta^*$. These two conditions are  and are sufficient conditions for the posterior distribution to converge to $\theta^*$ at rate $1/\sqrt n$ (Theorem 3.1 of \citet{kleijn2012bernstein}). RC~\ref{rc:smooth-prior} is a stronger condition that RC~\ref{rc:smooth-prior} is stronger that RC~\ref{rc:conv-post} in that the prior density is sufficiently smooth positive in a neighborhood around $\theta^*$. By Corollary 4.2, conditions~RC~\ref{rc:conv-post} and RC~\ref{rc:smooth-prior} are sufficient conditions that for any $\eta > 0$ , $$\Pi_n\left(\lVert \vartheta-\theta_*\rVert<\eta \mid \bm{\mathsf Z_n}\right)\rightarrow 1,\qquad P_0-a.s.$$ This is a form of ``strong consistency'' and a result also referred to as ``Schwartz consistency'' theorem.

Finally, 
\begin{condition}[Continuity, uniform convergence and posterior uniform integrability]\label{rc:lemma-s1} Let $$A_n(\theta)=\dfrac{1}{n}\sum_{i=1}^n \dot l_i(\theta)\dot l_i(\theta)^TI_n(\theta)^{-1}$$ and $$A(\theta) = E_0\{\dot l_i(\theta)\dot l_i(\theta)^T\}I(\theta)^{-1}.$$
The matrix-valued functions $A_n(\theta)$ and $A(\theta)$ satisfy the conditions 1-3 of Lemma~\ref{lemma:brcm} in Section A.3.
\end{condition}

In RC~\ref{rc:lemma-s1}, we note that the uniform convergence is implied by Strong Law of Large Numbers, and that the asymptotic uniform integrability roughly requires that the posterior tail probability of $\lVert A_n(\vartheta)\rVert$ is bounded almost surely.

\newpage
\section{Balanced loss functions for quasi-likelihood regression}\label{append:quasi-likelihood}

A slightly simpler version of the Balanced Inference Loss function provides a Bayesian analog for quasi-likelihood variance estimates, in regression models. Consider the same setup as in Section 4.2, with the additional assumption that $E(Y_i\mid\bm X_i) = \alpha V_i$ for some $\alpha > 0$. We propose the following balanced inference loss function

\begin{align}&L_{BI}(\beta,d, \Sigma, \omega) =\log\mid\Sigma\mid + (\beta - d)^T\omega\Sigma^{-1}(\beta - d) + \dfrac{p}{n\omega}\sum_{i=1}^n \frac{(Y_i - \mu_i)^2}{V_i},\label{eq:bil-quasilikelihood}\end{align}

where $\omega>0$ is a univariate scaling factor. In contrast to the general loss function for GLMs of Equation (3), in Equation (\ref{eq:bil-quasilikelihood}) the correction term accounting for lack of model fit  is proportional to the average Pearson residual $\dfrac{1}{n}\sum_{i=1}^n (Y_i - \mu_i)^2/V_i$. The Bayes rules for the loss function (\ref{eq:bil-quasilikelihood}) set $d$ to be the usual posterior mean of $\beta$, $\omega$ to be the posterior mean of average Pearson residual 
\begin{equation}\hat\omega =E_{\Pi_n}\left\{\dfrac{1}{n}\sum_{i=1}^n \dfrac{(Y_i - \mu_i)^2}{V_i}\mid \bm Y, \bm{\mathsf{X}}\right\},\label{eq:scaling-quasi}\end{equation}
and $\Sigma$ to be the posterior variance scaled by $\hat\omega_n$, i.e. $\hat\Sigma_n = \hat\omega_n \cdot \mathrm{Var}_{\Pi_n}\left(\beta\mid\bm Y, \bm{\mathsf{X}}\right)$. We see that $\hat\Sigma_n$ gives a Bayesian analog of the variance estimate for the frequentist quasi-likelihood method~\citep{wedderburn1974quasi}.

\newpage
\section{Proof of Results in Section 4}
\subsection{Results in Section 4.1}
Assume $Y_1,\dots,Y_n\sim N(\theta, \sigma_0^2)$, but a wrong statistical model was used which assumes $Y_1,\dots, Y_n\sim N(\theta, 1)$.

We have
\begin{align*}
p_\theta(Y_i) &= -\dfrac{1}{\sqrt{2\pi}}\exp\{-\dfrac{1}{2}(Y_i-\theta)^2\},\\
l_i(\theta) &= \dfrac{\partial}{\partial\theta}\log p_\theta(Y_i)= Y_i - \theta,\\
I_n(\theta) &= -\dfrac{1}{n}\sum_{i=1}^n \dfrac{\partial}{\partial\theta}l_i(\theta)=1.
\end{align*}
By Proposition 3.1, we have
\begin{align*}
    \hat\omega &= E_{\Pi_n}\left\{\dfrac{1}{n}\sum_{i=1}^n (Y_i - \vartheta)^2\mid Y_1,\dots, Y_n\right\}\\
    &= \dfrac{1}{n}\sum_{i=1}^n (Y_i -\bar Y)^2 + E\{(\vartheta - \bar Y)^2\mid Y_1,\dots, Y_n\}
\end{align*}

It is straightforward to derive that under the prior distribution $\vartheta \sim N(\mu,\eta^2)$ and the wrongly assumed model $Y_1,\dots, Y_n\mid \vartheta = \theta \sim N(\theta, 1)$, the posterior distribution is
$$N\left(\dfrac{\mu + n\eta^2 \bar Y}{1 + n\eta^2}, \dfrac{\eta^2}{1 + n\eta^2}\right).$$

Hence
\begin{align*}
    E_{\Pi_n}(\vartheta\mid Y_1,\dots, Y_n) &= \dfrac{\mu + n\eta^2 \bar Y}{1 + n\eta^2},\quad Var_{\Pi_n}(\vartheta\mid Y_1,\dots, Y_n) = \dfrac{\eta^2}{1 + n\eta^2},\\
     E_{\Pi_n}\{(\vartheta-\bar Y)^2\mid Y_1,\dots, Y_n\} &= \left(\dfrac{\mu + n\eta^2 \bar Y}{1 + n\eta^2}-\bar Y\right)^2 + \dfrac{\eta^2}{1 + n\eta^2}
\end{align*}
We concluded that
\begin{align*}
    \hat d &= \dfrac{\mu + n\eta^2 \bar Y}{1 + n\eta^2}\\
    \hat \omega &= \dfrac{1}{n}\sum_{i=1}^n (Y_i - \bar Y)^2 + \left(\dfrac{\mu-\bar Y}{1 + n\eta^2}\right)^2 + \dfrac{\eta^2}{1 + n\eta^2}\\
    \hat\sigma^2 &= \dfrac{\eta^2}{n(1 + n\eta^2)}\sum_{i=1}^n (Y_i - \bar Y)^2 + \dfrac{\eta^2(\mu-\bar Y)^2}{(1 + n\eta^2)^3}+ \left(\dfrac{\eta^2}{1 + n\eta^2}\right)^2.
\end{align*}

\subsection{Results in Section 4.2}
It is easy to derive that the score function and scaled empirical Fisher information are
\begin{align*}
    \dot l_i(\beta) &= \dfrac{\partial}{\partial\beta}\log p_{\theta,\alpha}(y_i)\\
    &= \dfrac{\partial\mu_i}{\partial\beta}\dfrac{\partial\theta_i}{\partial\mu_i}\dfrac{\partial}{\partial\theta_i}\left\{\dfrac{Y_i\theta_i - b(\theta_i)}{\alpha}\right\}\\
    &= \dfrac{\partial\mu_i}{\partial\beta}\dfrac{1}{\partial\mu_i/\partial\theta_i}\dfrac{\partial}{\partial\theta_i}\left\{\dfrac{Y_i\theta_i - b(\theta_i)}{\alpha}\right\}\\
    &= \dfrac{\partial\mu_i}{\partial\beta}\dfrac{Y_i - \partial b(\theta_i)/\partial\theta_i}{\alpha (\partial\mu_i / \partial\theta_i)}\\
    &= \dfrac{\partial\mu_i}{\partial\beta}\dfrac{Y_i - \mu_i}{\alpha V_i},\\
    I_n(\beta) &= -\dfrac{1}{n}\sum_{i=1}^n\dfrac{\partial}{\partial\beta}\dot l(\beta)\\
    &= -\dfrac{1}{n}\sum_{i=1}^n \bigg[\dfrac{\partial^2\mu_i}{\partial\beta\partial\beta^T}\left(\dfrac{Y_i - \mu_i}{\alpha V_i}\right) - \dfrac{1}{\alpha V_i}\dfrac{\partial\mu_i}{\partial\beta}\left(\dfrac{\partial\mu_i}{\partial\beta}\right)^T - \dfrac{\partial\mu_i}{\partial\beta}\left(\dfrac{Y_i - \mu_i}{\alpha V_i^2}\right)\left(\dfrac{\partial V_i}{\partial\beta}\right)^T\bigg]
\end{align*}
where $$V_i = \dfrac{\partial \mu_i}{\partial\theta_i}\mid_{\theta_i = (b')^{-1}\{g^{-1}(x_i^T\beta)\}}.$$

If a canonical link is used such that $g(\mu_i)=\theta_i = X_i^T\beta$, then
\begin{align*}
    \dfrac{\partial\mu_i}{\partial\beta} &= \dfrac{\partial\mu_i}{\partial\theta_i}\times \dfrac{\partial\theta_i}{\partial\beta} = V_iX_i,\quad \dfrac{\partial^2\mu_i}{\partial\beta\partial\beta^T} = X_i\left(\dfrac{\partial V_i}{\partial\beta}\right)^T.
\end{align*}
Hence
\begin{align*}
    \dot l_i(\beta) &= X_i(Y_i - \mu_i)/\alpha,\\
    I_n(\beta) &= -\dfrac{1}{n}\sum_{i=1}^n \bigg[X_i\left(\dfrac{\partial V_i}{\partial\beta}\right)^T\left(\dfrac{Y_i - \mu_i}{\alpha V_i}\right) - \dfrac{1}{\alpha V_i}X_iX_i^TV_i^2 - X_iV_i\left(\dfrac{\partial V_i}{\partial\beta}\right)^T\left(\dfrac{Y_i - \mu_i}{\alpha V_i^2}\right)\bigg]\\
    &= \dfrac{1}{n}\sum_{i=1}^n X_iX_i^TV_i/\alpha.
\end{align*}

\subsubsection*{Example: Linear regression} 
In linear regression, we have $\mu_i=\theta_i = X_i^T\beta$, $V_i=1$ and $\alpha=\sigma^2$. The score function and the scaled empirical Fisher information therefore are
\begin{align*}
    \dot l_i(\beta) &= X_i(Y_i - \mu_i)/\sigma^2,\qquad
    I_n(\beta) = \dfrac{1}{n}\sum_{i=1}^n X_iX_i^T/\sigma^2.
\end{align*}

\subsubsection*{Example: Poisson regression}
In Poisson regression, we have
$\mu_i = V_i = \exp(X_i^T\beta)$ and $\alpha=1$. The score function and scaled empirical Fisher information therefore are

\begin{align*}
    \dot l_i(\beta) &= X_i(Y_i - \mu_i),\qquad
    I_n(\beta) = \dfrac{1}{n}\sum_{i=1}^n X_iX_i^T\exp(X_i^T\beta).
\end{align*}

\subsection{Results in Section 4.3}

The assumed model is
\begin{align*}p_{\beta}(\widetilde T_i, X_i, \Delta_i) &= \left[\exp(X_i^T\beta)\exp\{-\exp(X_i^T\beta)\widetilde T_i\}\right]^{\Delta_i} \left[\exp\{-\exp(X_i^T\beta)\widetilde T_i\}\right]^{1-\Delta_i}\\
&= \exp\left\{\Delta_iX_i^T\beta - \exp(X_i^T\beta)\widetilde T_i\right\}.\end{align*}

The score function and (scaled) empirical Fisher information function are
\begin{align*}
    \dot l_i(\beta) &= \{\Delta_i - \exp(X_i^T\beta)\widetilde T_i\}X_i\\
    I_n(\beta) &= \dfrac{1}{n}\sum_{i=1}^nX_iX_i^T\exp(X_i^T\beta)\widetilde T_i
\end{align*}
Therefore, the balance inference loss function is

\begin{align*}&L_{BI}(\beta, d, \Sigma, \Omega) = \log\mid\Sigma\mid + (\beta - d)^T\Omega\Sigma^{-1}(\beta - d) + \\&\qquad \sum_{i=1}^n \{\Delta_i - \widetilde T_i\exp(\bm X_i^T\beta)\}^2  \bm X_i^T\left[\Omega\left\{\sum_{j=1}^n \widetilde T_j\exp(\bm X_j^T\beta)\bm X_j\bm X_j^T\right\}\right]^{-1}\bm X_i.\end{align*}

\newpage
\section{Simulation study for fixed-design linear regression}

We use the same simulation setting for the random-design linear regression in Section 4.2, except that the regressors are $X_i=(1, U_i)$, where $U_i$'s are fixed and evenly spaced between $0$ and $3$. Table~\ref{tab:brse-fixed-lm} and Figure~\ref{fig:int_cov} respectively show the comparison of different standard error estimates and the (frequentist) coverage probabilities over the minimal Kullback-Leibler point of different uncertainty intervals.

We see that inferences based on posterior standard deviation, robust standard error and Bayesian robust standard error are all conservative except when the model is correctly specified. With very small sample sizes ($n=5$), the confidence intervals with robust standard error have under-coverage, while the Bayesian robust confidence intervals remain conservative.

\begin{table}[!htbp]
\caption{\label{tab:brse-fixed-lm} Comparison between posterior standard deviation and the Bayesian robust standard error estimate to the true standard error of the Bayes point estimate $\hat d$ in fixed-design linear regression. Ave($\hat d$): Monte Carlo average of $\hat d$; $SE(\hat d)$: Monte Carlo standard deviation of $\hat d$; Ave(Post.SD): average posterior standard deviation; Ave($\widehat{BRSE}$): Monte Carlo average of Bayesian robust standard error. Gray rows indicate where the model is correctly specified.}
\centering
\vskip 1em

    \begin{tabular}{rr|rrrrrrr}
  \hline
n & $a$ & Ave($\hat d$) & $SE(\hat d)$ &  Ave(Post.SD) & $Ave(\widehat{BRSE})$ \\ 
  \hline
50 & -2 & -5.001 & 0.160 & 0.288 & 0.339 \\ 
  50 & -1 & -1.994 & 0.163 & 0.202 & 0.220 \\ 
  \rowcolor{Gray} 50 & 0 & 1.005 & 0.161 & 0.163 & 0.162 \\ 
  50 & 1 & 4.007 & 0.162 & 0.201 & 0.219 \\ 
  50 & 2 & 6.993 & 0.161 & 0.287 & 0.336 \\ 
  100 & -2 & -4.996 & 0.120 & 0.203 & 0.237 \\ 
  100 & -1 & -2.002 & 0.116 & 0.142 & 0.154 \\ 
  \rowcolor{Gray}  100 & 0 & 1.006 & 0.114 & 0.115 & 0.115 \\ 
  100 & 1 & 4.005 & 0.117 & 0.142 & 0.154 \\ 
  100 & 2 & 7.000 & 0.112 & 0.202 & 0.236 \\ 
   \hline
\end{tabular}
\end{table}

\begin{figure}[!htbp]
    \centering

    \includegraphics[width = 1\linewidth]{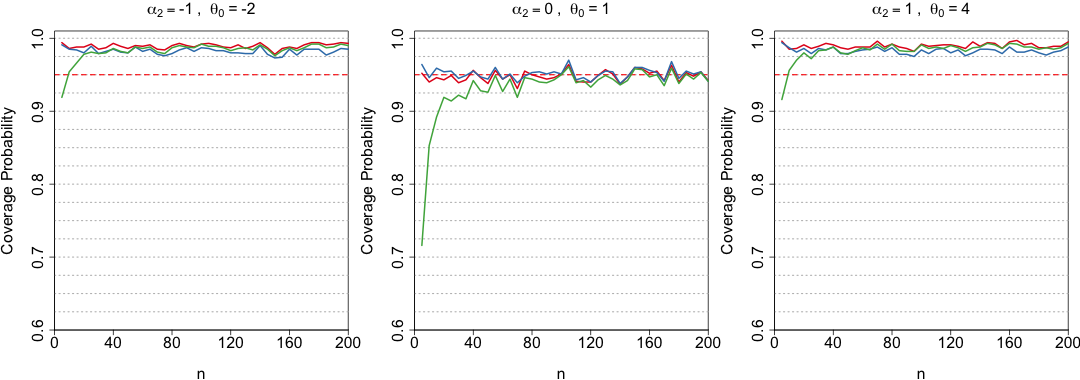}
    \caption{Coverage probabilities of 95\% credible interval (red), frequentist (blue) and Bayesian (green) robust confidence intervals for fixed-regressor linear regression. The middle column shows the results with  correctly-specified models. $\theta_0$ denotes the minimal Kullback-Leibler point.}
    \label{fig:int_cov}
\end{figure}

\newpage
\bibliographystyle{biom}
\bibliography{references}